\documentclass[3p]{elsarticle}

\usepackage{amsmath,amssymb,amsfonts}
\usepackage{graphicx}
\usepackage{booktabs}
\usepackage{multirow}
\usepackage{siunitx}
\usepackage{hyperref}
\usepackage{xcolor}
\usepackage{float}
\usepackage{amsmath}
\usepackage{amsthm}
\usepackage{float}
\usepackage{multirow}
\usepackage{array}
\usepackage{booktabs}
\usepackage{amssymb}
\usepackage{xcolor}
\usepackage{tikz}
\usepackage{mathrsfs}
\usepackage{subcaption}
\usepackage{caption}
\usepackage{algorithm}
\usepackage{algpseudocode}
\usepackage{multirow}
\usepackage{soul} 
\usepackage{color}

\usepackage{graphicx}
\usepackage{graphicx}
\newtheorem{remark}{Remark}
\newtheorem{theorem}{Theorem}

\usepackage[utf8]{inputenc}
\usepackage[T1]{fontenc}
\usepackage{amsmath,amssymb,amsthm}
\usepackage{mathtools}
\usepackage{booktabs}
\usepackage{array}
\usepackage{float}
\usepackage{enumitem}
\usepackage{booktabs}
\usepackage{hyperref}
\usepackage{cleveref}

\newcommand{\yd}{y_d}

\algnewcommand\algorithmicgiven{\textbf{Given:}}
\algnewcommand\Given{\item[\algorithmicgiven]}
\algnewcommand\algorithmicdesign{\textbf{Design requirement:}}
\algnewcommand\Design{\item[\algorithmicdesign]}
\algnewcommand\algorithmicstates{\textbf{Persistent states:}}
\algnewcommand\States{\item[\algorithmicstates]}
\algnewcommand\algorithmicinit{\textbf{Initialization} ($k=0$)\textbf{:}}
\algnewcommand\Init{\item[\algorithmicinit]}

\begin{document}

\begin{frontmatter}

\title{Data-driven Control with Real-time Uncertainty Compensation \\for Multi-Fuel Engines}
\author[umn]{Rajasree Sarkar}
\author[umn]{Arunava Banerjee\fnref{CA}}
\author[umn]{Sathya Aswath Govind Raju}
\author[umn]{Ishan Berk Altiner}
\author[umn]{Zongxuan Sun}
\author[arl]{Kenneth Kim}
\author[arl]{Chol-Bum Mike Keown}
\address[umn]{Department of Mechanical Engineering, University of Minnesota Twin Cities, Minneapolis, MN 55455, USA.}
\address[arl]{DEVCOM Army Research Laboratory, Aberdeen Proving Ground, MD, USA, 21005.}
\fntext[CA]{Corresponding author: Arunava Banerjee, Department of Mechanical Engineering, University of Minnesota Twin Cities, Minneapolis, MN 55455, USA. email: abanerje@umn.edu}

\begin{abstract}
Multi-fuel compression ignition (CI) engines offer superior power density and fuel flexibility. However, achieving consistent and optimal combustion phasing across a wide range of operating conditions remains a major challenge, particularly in the presence of modeling uncertainties. This paper presents a novel, data-driven real-time uncertainty compensation framework for combustion control in multi-fuel CI engines. The proposed approach introduces a pseudo-engine speed that enables dynamic adaptation of control inputs in response to uncertainty affecting the engine. To model the underlying combustion process, a Gaussian Process Regression (GPR) model is first trained on available input-output data, capturing the nonlinear and fuel-dependent behavior across varying operating conditions. Control inputs are then synthesized through model inversion of the learned GPR surrogate and augmented with a uncertainty compensator designed to mitigate deviations caused by dynamic variations in operating conditions and model inaccuracies. This integrated control strategy allows for real-time input corrections within a finite number of combustion cycles. Theoretical analysis establishes finite-time convergence guarantees for the proposed controller. Simulation results demonstrate that the proposed method steers the combustion phasing to the desired value in real-time, providing a scalable and adaptive control solution for multi-fuel CI engine operation.
\end{abstract}

\begin{keyword}
Uncertainty Compensation \sep Data-driven Modeling and Control Framework \sep Partial State Measurement, Stability Analysis \sep Compression Ignition Engine
\end{keyword}

\end{frontmatter}

\section{Introduction}
Hybrid electric propulsion systems incorporating compression ignition (CI) engines are increasingly recognized as a promising approach to improve performance across a range of applications. \cite{rendon2021aircraft}. However, developing robust and resilient control strategies for such systems remains a considerable challenge. This difficulty stems from the inherently complex nature of combustion in CI engines, involving the interaction of multi-fluid physics and chemical reaction in a complex, turbulent flow field. Accurate modeling of the chemical reactions is crucial for designing effective combustion control strategies that ensure stable engine operation. Key phenomena influencing combustion, such as droplet dynamics, fuel-air mixing, and evaporation, are highly nonlinear and sensitive to variations in fuel properties and operating conditions \cite{mcgann2020effect}. The complexity is further exacerbated in multi-fuel scenarios, where real-time changes in fuel characteristics introduce additional uncertainty. Moreover, combustion behavior is strongly affected by dynamic operating conditions, including engine speed, ambient pressure, and temperature. Thus, due to the inherent nonlinearity and complexity of the input–output relationships in combustion processes, physics-based modeling approaches become increasingly difficult to implement with high accuracy. As a result, conventional model-based control techniques fall short when applied to the problem of combustion phasing control in CI engines, particularly under uncertain and time-varying operating conditions.

To address the challenges associated with complex system dynamics, data-driven modeling have emerged as a compelling alternative to traditional physics-based methods \cite{kalogirou2003artificial, aliramezani2022modeling}. These methods alleviate the need for explicit physical knowledge which often limit the scalability and adaptability of conventional control strategies under varying operating conditions. A variety of machine learning-based modeling approaches have been explored in the literature, including neural networks, Gaussian Process Regression (GPR), and Support Vector Machines (SVM), among others. A comprehensive comparison of these methods is presented in \cite{ostergaard2018comparison}. 
Among these, GPR has garnered significant attention due to its ability to deliver high modeling accuracy with relatively limited training data, while also providing probabilistic uncertainty quantification across the input space \cite{seeger2004gaussian}. These properties make GPR particularly attractive for control applications involving uncertain nonlinear systems \cite{zhao2025probabilistic}. Notable applications include tracking control \cite{dai2024decentralized}, model predictive control \cite{zhu2023gaussian}, and reinforcement learning \cite{zhao2023probabilistic}. Within the domain of combustion phasing control, early works of GPR used in engine modeling appears in \cite{berger2011analysing} which showed promising utilization of data-driven approaches for modeling complex engine combustion process with a relatively simple model for computation. Recently, GPR-based modeling approach is presented in \cite{dong2022data} wherein a feedforward controller is designed by inverting the learned model to generate lookup tables (LUTs) across the engine operating space. This approach maintains computational efficiency, which is critical for practical implementation, while leveraging the modeling strengths of GPR. Similarly, recent work in \cite{banerjee2026data} demonstrates the utility of GPR in real-time fuel property estimation by inferring unknown cetane numbers during dynamic fuel switching. Despite the strengths of GPR, its performance heavily depends on the availability of rich and representative datasets that is especially a critical requirement for engine combustion modeling. Unlike simpler systems where moderate datasets may suffice, CI engines operate under a wide range of conditions involving multiple interdependent variables, such as load, speed, ambient conditions, and fuel properties. This results in a high-dimensional input space, where capturing the full dynamics demands carefully curated data that reflects the complex interactions among these factors. Inadequate or sparse data can lead to significant degradation in model fidelity and control performance, making data collection a major bottleneck in deploying data-driven control for such systems. For CI engines, acquiring such datasets experimentally is both costly and time-consuming. However, as an alternative, high-fidelity simulators such as those based on computational fluid dynamics (CFD) can be used to generate relevant training data \cite{huang2016diesel}. Now, in order to enhance robustness against model uncertainties that arise from such modeling inaccuracies, recent studies have explored online model update mechanisms, including both synchronous and asynchronous strategies, to adapt GPR models in real time \cite{umlauft2019feedback,wu2022safe,zheng2024safety}. While these methods improve adaptability to changing environments, their computational overhead makes them less practical for fast, high-dimensional systems like engine combustion. Moreover, many of these approaches assume a control-affine system structure and require full-state measurement, assumptions that rarely hold in practical engine control applications \cite{wang2021neural,wang2023model}. Some methods attempt to circumvent this by designing state observers for unmeasured states \cite{zhang2021adaptive}. However, identifying relevant states or even the appropriate system order in combustion processes is nontrivial. As a result, achieving reliable output regulation under partial observability remains a significant open challenge.

To address the limitations associated with developing GP-based reliable and robust control designs for practical applications, such as CI engines, this paper proposes a novel compensating mechanism termed the \textit{uncertainty compensator}. The proposed approach ensures that the real engine output tracks the desired value even when the engine is subjected to different fuel blends, varying engine speed and model uncertainties. This compensation is achieved without requiring real-time updates to the GP model or modifications to the feedforward control law, thereby significantly reducing the computational burden on the on-board processors. The proposed compensator solves an online optimization problem that captures the effect of uncertainties in terms of a pseudo-engine speed value. Subsequently, the pseudo-engine speed is provided to an offline generated LUT for computing an equivalent compensating control input. Thus, a compensating control is generated without the need of updating system model or, LUT in real-time. The stability analysis presented in this paper demonstrates that under the application of such compensating control, engine combustion phasing is steered to a desired value in finite-engine cycles. To evaluate the effectiveness of the proposed uncertainty compensator, it is applied to the challenging task of combustion phasing control in a multi-fuel CI engine. Simulation results demonstrate that the proposed control scheme successfully tackles such challenging scenarios and regulates the system output to achieve the desired combustion phasing. The main contributions made in this paper are summarized as follows:
\begin{enumerate}
    \item This work introduces a new uncertainty compensator that ensures robust tracking of desired combustion phasing despite model uncertainties, without requiring real-time updates to the GPR model. This significantly reduces the computational burden of the on-board processors.
    \item Model uncertainties are accounted for through an online optimization that computes a pseudo-engine speed, which is then fed to an offline-generated LUT for compensating control inputs thereby eliminating the need for real-time LUT updates.
    \item Through stability analysis, combustion phasing of the engine is guaranteed to converge to the desired value in finite engine cycles under the proposed approach. Its effectiveness is validated through simulations studies where compression ignition (CI) engine system is subjected to different fuel blends, variations in engine speed, and inherent model uncertainties in real-time. 
\end{enumerate}

The remainder of this paper is organized as follows: Section \ref{sec_two} provides the overall framework for the proposed modeling and control approach,
including uncertainty compensator which effectively tackles model uncertainty in real-time. 
Section \ref{sec_three} discusses the simulation results, and Section \ref{sec:six} provides concluding remarks and potential future research directions.

\section{Methodology} \label{sec_two}
This section presents the problem formulation for achieving the desired combustion phasing in a multi-fuel compression ignition (CI) engine. The formulation lays the foundation for the subsequent development of a data-driven control framework, which is introduced in the following section.

\subsection{Control Formulation}
This subsection defines the inputs and output considered in this study to characterize the combustion performance of the CI engine, which are listed as follows \eqref{eq1}:
\begin{align} \label{eq1}
    \begin{matrix}
        \text{\textbf{Control Inputs}}~(u): & \text{Main Injection Timing (MIT)},\\
        & \text{Glow Plug Power (GPP)}\\
        \text{\textbf{External Inputs}}~(Z): & \text{Fuel Cetane number (CN)},\\
        & \text{Engine Speed (RPM)}\\
        \text{\textbf{Output}}~(y): & \text{Crank Angle location at}\\
        & \text{50\% fuel mass burn (CA50)}
    \end{matrix}
\end{align}
Note that inputs to the system is categorized into two types: control inputs and external inputs. The key difference between the two types of inputs is that the external inputs are beyond the end user's control, as they define the operating conditions of the engine and have a significant impact on combustion performance. On the other hand, the control inputs can be changed appropriately by the user through actuators. These control inputs are used for achieving or maintaining the desired fuel combustion performance, which is defined by the common combustion metric of CA50. The CA50 represents the crank angle at which 50\% of the fuel mass gets burned, after top dead center (TDC). Through CA50, an estimate of combustion characteristics can be deduced in terms of pressure rise rate and fuel conversion efficiency. A deviation of CA50 from its desired location generally leads to detrimental effects on the engine performance, and hence the need for an appropriate control strategy is critical for achieving the desired performance throughout the engine combustion cycles. In order to achieve better combustion performance, an engine may have to resort to multiple injections. However, for this work, single injection case referred to as main injection has been considered. The CA50 is dependent on the control input, main injection timing (MIT) and thus it's adjusted for achieving the desired performance. The other control input considered in this work is the ignition assistant power. The particular ignition assistant considered is the glow plug and the control input is thus the glow plug power (GPP). It is to be noted that fuels which have higher cetane numbers would burn well under varied operating conditions. However, such is not the case with low cetane fuels, where even after adjusting the injection timings, it is difficult to achieve a good combustion. The glow plug actuator is used to create a hot spot, which would help in igniting the fuel-air mixture for low-cetane number fuels \cite{amezcua2022ignition}. Thus, this work utilizes glow plug to assist in achieving the desired combustion for a larger range of fuel cetane numbers. The engine map hence can be represented through its inputs and outputs as:
\begin{align}
    y:= CA50 = f(Z,u)
\end{align}
here the inputs to the unknown function $f$ is the control $u=[u1,u2]^{\top}$ and external inputs $Z=[Z_1,Z_2]^{\top}$ where $Z_1$, $Z_2$, $u_1$ and $u_2$ represents $CN$, $RPM$, $MIT$ and $GPP$ respectively. Whereas, the output $y$ is CA50. The overall objective is to control the CA50 to achieve desired combustion.

\subsection{Proposed Control Architecture}
The overall architecture of the data driven control strategy used for this work is shown in Fig. \ref{UC_BD}. The closed-loop architecture addresses the impact of model uncertainties on combustion control through deploying two key components: a feedforward controller and an uncertainty compensator. For a given fuel type and engine speed, a control input is derived from offline generated feedforward LUT in order to achieve desired combustion phasing. However, in the presence of model uncertainties or disturbances, there arises error between actual engine and a Gaussian Process based model developed using available data. As this error goes beyond a threshold value, the uncertainty compensator is activated. The uncertainty compensator then intervenes to compute a pseudo-engine speed. This pseudo-speed is then supplied to both the LUT and the system model: it informs the model to simulate control behavior at this adjusted speed, and simultaneously updates the LUT to generate a new control input. The compensated input aiming to eliminate the uncertainty, is ultimately applied to the real engine, which continues to operate at the measured engine speed. Thus, the proposed control strategy allows for real-time adaptation to uncertainties while preserving the benefits of model-based feedforward control. The detailed functioning of both the feedforward and uncertainty compensation modules is discussed in the subsequent subsection. 

\begin{figure*}[h!]
        \centering
        \includegraphics[scale=0.2]{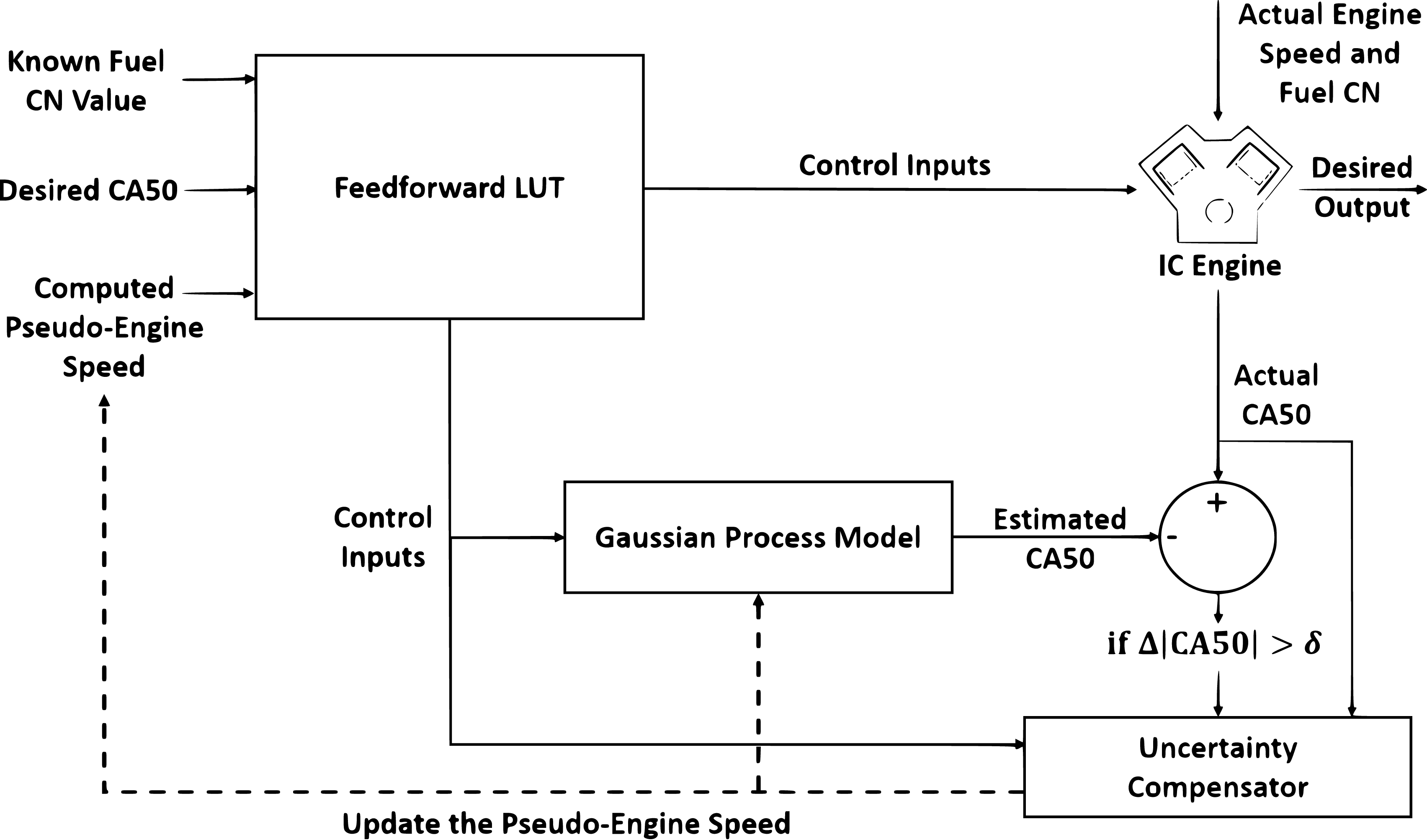}
        \caption{Block diagram of the overall architecture with Uncertainty Compensator}\label{UC_BD}
    \end{figure*}

\subsection{Feedforward Control Design}\label{sec:ControlDesign}
The feedforward control strategy is implemented in two stages: data collection and model inversion. In the first stage, data comprising control inputs $(MIT, GPP)$, external inputs representing operating conditions $(CN, RPM)$, and the corresponding output $CA50$ are obtained. This dataset is then used to train a surrogate model that captures the system behavior. In the second stage, model inversion is applied to the surrogate model to construct look-up tables (LUTs). These LUTs are designed such that, given a desired $CA50$ and measurable operating conditions $(CN, RPM)$, the corresponding control inputs $(MIT, GPP)$ can be generated to drive the engine's actual $CA50$ towards the desired value.  

\subsubsection{System modeling}

Data-driven modeling approaches are generally categorized into parametric and non-parametric methods. Parametric models assume a fixed structure and tune parameters using data, which may limit accuracy when system dynamics are not well known. This work employs Gaussian Process Regression (GPR), a non-parametric method that adapts flexibly to complex systems like combustion process. Once trained with suitable representative data, GPR which is a widely used probabilistic modeling method, predicts both the mean and uncertainty at any test point as
\begin{align}
    f \sim GP(m,k)
\end{align}
where, $f$ is a non-linear function that defines the gaussian process with a mean $m$ and covariance $k$ and this function captures the input–output behavior of the system under consideration. Considering practical systems are influenced by measurement noise $\epsilon$, gaussian process incorporates this as zero mean and variance $\sigma_{n}^{2}$, thereby representing output as:
\begin{align} \label{out_gpr}
    y = CA50(x)=f(x)+ \epsilon ~ \text{where}~ \epsilon \sim N(0,\sigma_{n}^{2}).
\end{align}
Here, the input is $x=[Z,u]\in \mathbb{R}^4$ and by combining the terms of the \eqref{out_gpr}, the output CA50 can also be represented as:
\begin{align}
CA50(x) \sim GP(m,k+\sigma_{n}^{2}\delta_{pq})
\end{align}
where, the Kronecker’s delta $\delta_{pq} = 1$ if $p=q$. The output GP has a chosen mean of $m=0$ and covariance $k+\sigma_{n}^{2}\delta_{pq}$ which are selected from varied options. Gaussian process modeling involves optimizing covariance function hyperparameters which determine how well the model fits the training data. To obtain the optimal hyperparameters, the log marginal likelihood function is maximized and considering that $N$ data-points fit the GP model, the output CA50 is now given as:
\begin{align}
CA50(X) \sim GP(m,K+\sigma_{n}^{2}I)
\end{align}
where, $X$ is the input vector, $m$ represents the vector consisting of mean values and the covariance matrix is given as $K=(k(x_i,x_j))~\forall i,j \in [1,N]$.

After fitting the model, it is important to perform predictions at any unknown input point. Let $CA50(X)$ be the known outputs, and $CA50(x')$ is the unknown output to be estimated at input $x'$ using the gaussian process model. Since the process is gaussian, the joint distribution of both known and unknown points can be written as:
\begin{align}
    \begin{bmatrix}
        CA50(X) \\
        CA50(x')
    \end{bmatrix} \sim N \left( \begin{bmatrix}
        m \\
        m(x')
    \end{bmatrix}, \begin{bmatrix}
        K+\sigma_{n}^{2}I) & K(X,x')\\
        K^T(X,x') & k(x',x')
    \end{bmatrix} \right).
\end{align}
Here, $CA50(X)$ is the training set, and $CA50(x')$ is the unknown output. $K(X,x')=k(x_i,x')~\forall i \in [1,N]$ represent the covariance
calculation between known data set $X$ and unknown test point $x'$.

Using the above expression, the conditional distribution of $CA50(x')$ given $CA50(X)$ can be obtained. For the sake of brevity, the mathematical discussion for the model development is eliminated and can be found in \cite{williams2006gaussian} for more details. After performing mathematical calculations, the model estimates at any
unknown point $x'$ is written as:
\begin{align}
\begin{matrix}
    CA50_m(x') = &m(x') + K(X,x')K^{-1}(CA50(X) - m) \\
    CA50_{\sigma^2}(x') = &k(x',x') - K(X,x')^TK^{-1}K(X,x')
\end{matrix}
\end{align}
where, $CA50_m(x')$ are the mean and $CA50_{\sigma^2}(x')$ variance estimate from the gaussian process at any unknown point.

\subsubsection{Optimal look-up table generation}

Once the surrogate model has been developed, the next step involves the control design to ensure desirable tracking. The control design in this work involves model inversion in order to obtain a feedforward control. This model inversion is achieved through formulation of an optimization problem and the cost function chosen to achieve this objective is given as:
\begin{align}
    J_i(x)= & |CA50^i_{desired} -CA50_m(x)|\\ \nonumber
    & +p_1 \times CA50_{\sigma}(x)+p_2 \times GPP
\end{align}
where, the $i^{th}$ setpoint of CA50 is given as $CA50_{desired}^{i}$ with $i=1,2,..., R$. The mean $CA50_m(x)$, and uncertainty $CA50_{\sigma}(x)$, estimates of the surrogate model is evaluated at $x= [MIT,GPP,CN_j,RPM_k]$. Here,  $CN_j$ represents the $j^{th}$ known fuel blend and $RPM_k$ represents $k^{th}$ known engine speed where $j,k = 1, 2, ...$. Also, $p_{1}$ and $p_2$ are the chosen weights for the model uncertainty and glow plug power. 

The cost function $J_i(x)$ is  formulated to identify the optimal control input $x$ that drives the predicted combustion phasing $CA50_m(x)$ towards a target value of $CA50_{desired}$, while simultaneously penalizing model uncertainty and the usage of GPP. To discourage the optimizer from selecting inputs in regions where the model exhibits high predictive uncertainty, a penalty term weighted by the constant $p_1$ is introduced. This ensures the solution remains within well-characterized regions of the model. An additional term, scaled by $p_2$, penalizes excessive reliance on GPP. More details can be found in \cite{dong2022data,pal2024data}. The overall optimization problem can be formulated as:
\begin{align}
\label{gen_opt_prob}
\begin{matrix}
\underset{x}{\text{min}} & J_i(x)\\
\text{Subject to} &  ~~~~~~~MIT_{prev} \leq MIT \leq MIT_{prev} +5\\
&  ~~~~~~~GPP_{prev} \leq GPP \leq GPP_{ub} \\
\end{matrix}
\end{align}
where $MIT_{prev}$, $GPP_{prev}$ and $GPP_{ub}$ stands for admissible control input range and have been discussed in details in \cite{pal2024data}.
Here, the input vector is defined as $x= [MIT,GPP,CN,RPM]$, where $(MIT, GPP)$ represent the control variables, and $(CN$, $RPM)$, denote the current operating conditions. The optimal control input $x$ is computed by solving the constrained optimization problem formulated in \eqref{gen_opt_prob}, given as:
\begin{align}
    x^* = \underset{x}{argmin}~J(x)
\end{align}
The optimal solution $x^*$ steers the system model to the desired value for a given $Z =[CN,RPM]^{\top}$. Note that it may not be always feasible to solve the optimization problem in real-time due to huge computational requirement. One way to avoid such online computation can be to generate a feedforward lookup table (LUT) which requires repeatedly solving the optimization process offline for various desired values. Each entry of the lookup table (LUT) defines the control vector
$u = [\mathrm{MIT}, \mathrm{GPP}]^{\top}$
required to achieve a prescribed combustion phasing target 
$y_d = \mathrm{CA50}_{desired} \in \mathbb{R}$
at a specified operating condition $ Z $. Consequently, for any admissible pair $ (y_d, Z) $, the control input $ u $ is obtained by interpolation over the discrete LUT, thereby enabling model-based output regulation. Let the operating condition be parameterized as
$Z = [Z_1, Z_2]^{\top}$, where $ Z_1 $ denotes fuel cetane number and $ Z_2 $ denotes the physical engine speed. The Gaussian Process (GP) surrogate model defines the input–output mapping
\begin{align}
y_m &= f_m(Z,u),
\end{align}
with the LUT-generated control expressed as $u = g(y_d,Z)$. For notational simplicity, we denote $g(Z) \equiv g(y_d, Z)$ throughout the remainder of the paper. By construction of the LUT via model inversion, the composite mapping satisfies
\begin{align}
f_m\bigl(Z, g(Z)\bigr) &= y_d,
\label{eq:fh-LUT-yd}
\end{align}
i.e., the surrogate model output coincides with the desired combustion phasing at the specified operating condition. To characterize the corresponding closed-loop behavior under the true plant dynamics $ f(\cdot) $, we define the composite real map induced by the LUT as
\begin{align}
\Phi(Z) &:= f\bigl(Z, g(Z)\bigr).
\label{eq:Phi-def}
\end{align}
\begin{remark}
    It is important to emphasize that while the LUT performs interpolation to generate the control input based on $y_d$ and $Z$, the inclusion of additional constraints ensures that the LUT consistently produces a bounded control input. That is, for any $y_d$ and $Z$, the LUT generates admissible control input $u$.
\end{remark}

\subsection{Real-time Uncertainty Compensation}
In the case of CI engines, collecting data across a wide range of operating conditions and control inputs is both costly and time-intensive. As a result, the limited availability of a sufficiently large and diverse dataset can reduce the accuracy of the surrogate model. Consequently, the control inputs obtained from LUT, for a given desired $CA50$ and operating conditions $CN$ and $RPM$, may fail to achieve the desired engine performance. This necessitates the design of an additional compensating control mechanism to handle such scenarios. To this end, the present work proposes a novel real-time uncertainty compensator algorithm where instead of giving real engine speed (RPM) $Z_2^0$ as an input to LUT, a compensating pseudo-RPM $Z_2$ is provided to LUT. Thus, the composite real map \eqref{eq:Phi-def} induced by the LUT becomes
\begin{align}
\Phi(Z) &:= f\bigl(Z^0, u\bigr)= f\bigl(Z^0, g(Z)\bigr)
\end{align}
where $ Z^0= [Z_1^0, Z_2^0]^{\top} $ denotes the actual measured operating condition in which real engine operates while $Z = [Z_1^0, Z_2]^{\top}$. This mapping quantifies the model–plant discrepancy along the inversion manifold defined by the LUT and forms the basis for subsequent compensating mechanism design.

\subsubsection{Uncertainty Compensator Algorithm} 

Considering normal engine operation, where fuel cetane $CN$ and engine speed $RPM$ are known and the optimal control strategy is trying to achieve a fixed
$CA50$ setpoint, $CA50_{desired}$ i.e. $y_d$, then the performance of the control strategy can be defined by the error $e$ as:
\begin{align}\label{eqn:err}
    e &= y - f_m(Z^{cur},u^{cur}) = y - f_m\bigl(Z^{cur},g(Z^{cur})\bigr)
\end{align}
where $y$ represents the actual CA50 value computed from in-cylinder pressure signal in real-time and $f_m(Z^{cur},u^{cur})$ is the
Gaussian model mean estimate at optimal control inputs $u^{cur} = g(Z^{cur})$ generated from LUT with $Z^{cur}=[Z_1^0,Z_2^{cur}]$ where $Z_1^0$ is measured $CN$ number and $Z_2^{cur}$ is engine speed. Ideally, the error $e = 0$ defines perfect control performance. However, this is impossible to achieve in actual engine operation due to the stochastic nature of engine operation. Therefore, at optimal control setting, the error $e$ is bounded by a small value $\epsilon$ i.e. $|e|<\epsilon$.

Now, assuming the the real engine runs at operating conditions or under the application of control inputs for which the model has not been adequately trained. In such cases, model uncertainty arises, leading to a deviation in $y$, while $y_d$ remains unchanged. This occurs because the feedforward controller lacks the ability to adapt its control inputs to account for these uncertainties. This will cause the error $e$ to increase and thus demand a compensating control to minimize the error. In this work, the compensating control is generated through finding an equivalent speed condition $\hat{Z}_2^{new}$ that reduces the error  $e^{new}$ defined as follows:
\begin{align}\label{eqn:err2}
    e^{new} &= y - f_m\bigl(\hat{Z}^{new},g(\hat{Z}^{new})\bigr)
\end{align}
where $\hat{Z}^{new}=[Z_{1}^0,\hat{Z}_{2}^{new}]$ and $\hat{Z}_{2}^{new}$ represents the engine speed at which, if the surrogate model is assumed to operate, the predicted output $y_m\bigl(\hat{Z}^{new},g(\hat{Z}^{new})\bigr)$ closely aligns with the actual output $y$ of the real engine, thereby minimizing the error $e^{new}$. The value $\hat{Z}_2^{new}$ can be obtained by solving an optimization problem which is defined as follows:
\begin{align}\label{opt_prob}
&\underset{\hat{Z}_2^{new}\in Z_2^*}{min}~\left|\left(y+\gamma^{new}\right)- f_m\bigl(\hat{Z}^{new},g(Z^{cur})\bigr)\right| 
\nonumber\\ &~~~~~~~~~~~~~~~~~~~~~~~~~~~~~ 
+\beta \left|\hat{Z}_2^{new}-Z_2^{cur}\right|
        \end{align}
where $Z_2^*$ represents the set of all operating points that the optimization algorithm can visit while ensuring convergence is achieved gradually over engine cycles. The weight $\beta$ is a small positive scalar that facilitates choosing solution as close as possible to $Z_2^{cur}$. The control inputs generated by $g(Z^{cur})$ are optimal $MIT$ and $GPP$ obtained from LUT corresponding to pseudo-speed $Z_2^{cur}$, measured fuel cetane $CN$ and desired CA50 $y_d$. The parameter $\gamma$ in \eqref{opt_prob} is referred in this work as an adaptive bias and is designed to evolve based on two conditions which are discussed next. For that first, we represent two expressions as follows:
\begin{align*}
    T_1:~&\bigl(y-y^{prev}\bigr) \\ T_2:~&\bigl(y+y^{prev}-2y_d\bigr) 
\end{align*}
where $y^{prev}$ stands for the previous value of $CA50$. Using above two conditions, $\gamma$ is designed as follows:
\vspace{0.2cm}\\
\noindent
If $T_1*T_2<0$, then
\begin{subequations}\label{eqn:gamma}
\begin{equation}\label{gam_cond1}
    \gamma \leftarrow 0,
\end{equation}
otherwise, 
\begin{equation}\label{gam_cond2}
    \gamma \;\leftarrow\; \min\bigl(|\gamma+\Gamma \text{sign}(T_1(k))|,\gamma_{\max}\bigr)\text{sign}\bigl(\gamma+\Gamma\text{sign}(T_1(k))\bigr)
\end{equation}
\end{subequations}
The parameter $\Gamma>0$ is a user-defined small positive scalar and \textit{sign} stands for signum function which is defined as:
\begin{align*}
    \text{sign(x)} = \left\{\begin{matrix}
        +1 & \text{if}~x>0 \\
        0 & \text{if}~x=0 \\
        -1 & \text{if}~x<0 \\
    \end{matrix}\right.
\end{align*}

Note that since the solution of optimization problem \ref{opt_prob} is not the real engine speed, $\hat{RPM}$ or $\hat{Z}_2$ is referred in this work as the \textit{pseudo-engine speed} or \textit{pseudo-RPM}. On achieving pseudo-RPM through solving optimization problem \eqref{opt_prob}, it is provided to the LUT to generate a compensating control that reduces error $e^{new}$. It is also important to note here that the pseudo-RPM does not possess any inherent physical significance and should not be confused with the real measured value of engine speed. 

\begin{remark}
    The proposed control strategy shares a conceptual similarity with Model Reference Adaptive Control (MRAC) in that both aim to handle model uncertainty and ensure the system output tracks a desired reference. However, the key difference lies in how this is achieved. Traditional MRAC requires a known model structure and adapts controller parameters in real time to minimize the tracking error between the system and a reference model. In contrast, the present method uses a data-driven approach based on the trained GP model, which may not be accurate at all operating conditions. Rather than adapting controller parameters online, it uses a separate compensator to adjust a pseudo-speed value used by the LUT to switch between different feedforward control maps. Further, this also avoids the need for real-time adaptation of the model and controller (i.e. retraining the model and updating the feedforward control through LUT update). On the other hand, unlike MRAC where the plant and model typically receive different inputs during adaptation, here both the model and the real system are given the same control input always. Since the LUT and the model are derived from the same trained mapping, providing the same pseudo-RPM to the LUT and operating the model at that pseudo-RPM ensures that the model consistently generates the desired output. To summarize, the proposed approach ensures performance without requiring structural knowledge of the system or online control parameter adaptation, distinguishing itself clearly from classical MRAC. 
\end{remark}

\begin{algorithm}[H]
\caption{Uncertainty Compensator}
\label{alg:corrected}
\begin{algorithmic}[1]
\Require Desired CA50 $\yd$, measured $Z^0=[Z_1^0,Z_2^0]$, GP model~$f_m(Z_1,Z_2)$, LUT~$g(Z_1,Z_2)$, convergence threshold $\varepsilon_{\mathrm{thr}}>0$ and minimum step $\eta_{\min}$ with $0<\eta_{\min}\le\varepsilon_{\mathrm{thr}}/\mu_{\max}$.
\While{$|y^{prev}-\yd|\ge\varepsilon_{\mathrm{thr}}$} 
\Statex \hspace{\algorithmicindent}\rule{0.85\linewidth}{0.4pt}
\Statex \hspace{\algorithmicindent}\textsc{Step 1: Apply current pseudo-$Z$ to plant.}
\State Compute feedforward command $u^{cur}=g(Z^{\mathrm{cur}})$.
\State Apply $u^{cur}$ to the real engine (at fixed $Z^0$); 
\State Measure CA50:~$y^{cur}$ and $e(k)$
\Statex \hspace{\algorithmicindent}\rule{0.85\linewidth}{0.4pt}
\Statex \hspace{\algorithmicindent}\textsc{Step 2: Progress check and $\gamma$ update.}
\State $T_1(k)\leftarrow y^{cur}-y^{prev}$
\State $T_2(k)\leftarrow y^{cur}+y^{prev}-2\yd$
\If{$T_1\,T_2<0$} 
\State $Z^{\mathrm{prev}}\leftarrow Z^{\mathrm{cur}}$ 
\State $y^{prev}\leftarrow y^{cur}$
\State $\gamma\leftarrow 0$ 
\Else 
\State Update $\gamma$ using \eqref{gam_cond2}
\EndIf
\Statex \hspace{\algorithmicindent}\rule{0.85\linewidth}{0.4pt}
\Statex \hspace{\algorithmicindent}\textsc{Step 3: Model-based optimization}
\State Solve \eqref{opt_prob} with $\gamma$  to obtain $\hat{Z}_2^{new}$
\Statex \hspace{\algorithmicindent}\rule{0.85\linewidth}{0.4pt}
\Statex \hspace{\algorithmicindent}\textsc{Step 4: Perturbation safeguard.}
\State $\Delta Z_2\leftarrow \hat{Z}_2^{new}-Z_2^{cur}$ 
\State $Z_2^*\leftarrow\min\!\bigl(\delta_0,\,|y^{prev}-\yd|/\mu_{\max},\,R\bigr)$ 
\If{$|\Delta Z_2|>Z_2^*$} 
\Statex \hspace{\algorithmicindent}\emph{Phase 1: $\gamma$-bisection (model direction)}
\State $\gamma_{\mathrm{trial}}\leftarrow\gamma$
\For{$j=1,\ldots,J_{\max}$}
    \State $\gamma_{\mathrm{trial}}\leftarrow\tau_s\cdot\gamma_{\mathrm{trial}}$
    \State Recompute $\hat{Z}_2^{new}$ solving \eqref{opt_prob} with $\gamma_{\mathrm{trial}}$ 
    \State $\Delta Z_2\leftarrow\hat{Z}_2^{new}-Z_2^{cur}$
    \If{$|\Delta Z_2|\le Z_2^*$}
        \State $\gamma\leftarrow\gamma_{\mathrm{trial}}$
        \State \textbf{go to} Step~5
    \EndIf
\EndFor
\Statex \hspace{\algorithmicindent}\emph{Phase~2: Clamping fallback (guarantees $|\Delta Z_2|\le Z_2^*$).}
\Statex $\hat{Z}_2^{new}\leftarrow Z_2^{cur}+Z_2^*\cdot\text{sign}(\Delta Z_2)$ 
\State $\gamma\leftarrow 0$ 
\EndIf
\State $\Delta Z_2\leftarrow \hat{Z}_2^{new}-Z_2^{cur}$ 
\State $\eta_{\mathrm{eff}}\leftarrow\min(\eta_{\min},\,Z_2^*)$
\If{$0<|\Delta Z_2|<\eta_{\mathrm{eff}}$}
  \State $\hat{Z}_2^{new}\leftarrow Z_2^{cur}+\eta_{\mathrm{eff}}\cdot\text{sign}(\Delta Z_2)$ (min.-step floor)
\EndIf
\Statex \hspace{\algorithmicindent}\rule{0.85\linewidth}{0.4pt}
\Statex \hspace{\algorithmicindent}\textsc{Step 5: Update current pseudo-coordinate.}
\State $Z^{\mathrm{cur}}\leftarrow[Z_1^0,\,\hat{Z}_2^{new}]$
\EndWhile
\end{algorithmic}
\end{algorithm}

Algorithm \ref{alg:corrected} summarizes the uncertainty compensator with the perturbation safeguard step. The algorithm becomes active when the CA50 tracking error $e$ exceeds a predefined threshold. In this work, the threshold is chosen based on the experimentally observed cycle-to-cycle variation of CA50 across the considered fuel blends and engine speeds. At each engine cycle, an optimization problem \eqref{opt_prob} is solved using the GP model to produce a candidate pseudo-RPM $\hat{Z}_2^{new}$ that is expected to reduce the CA50 error when queried through the LUT. The resulting effective perturbation $\Delta Z_2$ defined in line 15 of Algorithm \ref{alg:corrected}, is then passed through a safeguard step (Step 4) that shrinks its magnitude, via bisection. In the next step, the perturbation is clamped (if needed) so that $|\Delta Z_2| \leq Z_2^*$ where $Z_2^*$ defined as in line 16, ensuring convergence. While on the other hand, Zeno behavior is avoided through using $\eta_{\text{eff}}$ as in Line 33 (minimum step floor). The parameters that characterize $Z_2^*$ is discussed in Remark \ref{conv_param}. During each cycle, the safeguarded perturbation is evaluated using the plant-based acceptance condition $T_1 *T_2 \leq 0$. If the condition is satisfied, the current CA50 and the associated pseudo-RPM are stored as the updated pair $(y^{\mathrm{cur}}_\ell, Z^{\mathrm{cur}}_{2,\ell})$, referred to as \textit{anchor}, where $\ell$ denotes the anchor index, incremented upon each successful acceptance. If the acceptance condition is not satisfied, the algorithm retains the previously accepted pair $(y^{\mathrm{cur}}_\ell, Z^{\mathrm{cur}}_{2,\ell})$ and updates the parameter $\gamma$ according to \eqref{gam_cond2}. This initiates an inner loop in which $\gamma$ is iteratively adapted over successive engine cycles, starting from $\gamma = 0$, and used within the optimization problem to generate candidate pseudo-RPM values $\hat{Z}_2^{\mathrm{new}}$. The inner loop continues until a value of $\gamma$ is found such that the resulting CA50 satisfies the acceptance condition $T_1 * T_2 \leq 0$. Once a valid update is obtained, the inner loop terminates (guaranteed through Theorem \ref{thm:convergence}), $\gamma$ is reset to zero, and the accepted pair $(y^{\mathrm{cur}}_\ell, Z^{\mathrm{cur}}_{2,\ell})$ is updated. The algorithm then proceeds to the next outer-loop iteration. This entire process is repeated until the CA50 tracking error falls below the prescribed tolerance $\varepsilon_{thr}$.

\begin{remark}\label{conv_param}
    The parameters in Algorithm \ref{alg:corrected}, $\delta_0$, $R$, and $|y^{prev}-\yd|/\mu_{\max}$ define the favorable step-size region in which the plant--LUT composition behaves predictably and the GP-based correction moves CA50 in the correct direction. The constant $\delta_0 > 0$ is the largest neighborhood around zero in which any effective perturbation $\Delta Z_2$ generated by the optimization produces a CA50 change (i.e. $\Delta y$) with sign opposite to the current anchor error, and $R > 0$ is the radius of the ball in pseudo-RPM space over which the sector bounds
\begin{align}
\mu_{\min}\,|\Delta Z_2| \leq |\Delta y(\Delta Z_2)| \leq \mu_{\max}\,|\Delta Z_2|
\end{align}
hold uniformly for the plant--LUT composition. The constants $\mu_{\min}$ and $\mu_{\max}$ represent the corresponding lower and upper sector gains over this region. Because the stopping criterion enforces $|e_\ell|\ge\varepsilon_{\mathrm{thr}}$ throughout operation, the favorable threshold satisfies $Z_2^*=\min(\delta_0,\,|y^{prev}-\yd|/\mu_{\max},\,R)\ge\min(\delta_0,\,\varepsilon_{\mathrm{thr}}/\mu_{\max},\,R)>0$, so the region never degenerates as the error shrinks. The applied perturbation is additionally lower-bounded by $\eta_{\min}$ in Step~4. Therefore, each accepted step satisfies $\eta_{\min} \leq \left|\Delta Z_2^{(\ell)}\right| \leq Z_2^*$, where $\eta_{\min} \leq \varepsilon_{\mathrm{thr}}/ \mu_{\max}$ ensures that no overshoot is introduced. All of these parameters involved in characterizing $Z_2^*$ must be selected in accordance with the prescribed construction to ensure the convergence guarantees of Theorem \ref{thm:convergence}.
\end{remark}

\subsubsection{Stability Analysis}\label{sec:stability}
In this subsection, a theorem is provided to guarantee that under the proposed control architecture, $CA50$ converges to $CA50_{desired}$ over the engine cycles thereby eliminating the effect of uncertainties adversely affecting the system performance.

\begin{theorem}[Convergence of Proposed Controller]\label{thm:convergence}
Under Assumptions B1--B5 (mentioned in Appendix~\ref{appendixB}) with convergence threshold $\varepsilon_{\mathrm{thr}}>0$, and assuming $|e_0|\ge\varepsilon_{\mathrm{thr}}$, the following hold for Algorithm~\ref{alg:corrected}, where $\ell=0,1,2,\dots$ indexes the accepted anchors, $e_{\ell}=y^{prev}_{\ell}-\yd$, and $V_{\ell}=1/2~e_{\ell}^{2}$.
\begin{enumerate}
\item [(i)] \textit{Finite inner-loop termination:} For every anchor~$\ell$ with $|e_{\ell}|\ge\varepsilon_{\mathrm{thr}}$, there exists a finite number of iterations such that the acceptance condition
$T_1 * T_2 < 0,$
evaluated using $(y^{\mathrm{cur}},\,y^{\mathrm{prev}})$, is satisfied, and the anchor is updated, i.e., $(Z^{\mathrm{prev}}, y^{\mathrm{prev}}) \leftarrow (Z^{\mathrm{cur}}, y^{\mathrm{cur}})$.
\item [(ii)] \textit{Strict monotone decrease:} The Lyapunov function $V_{\ell}=1/2~e_{\ell}^{2}$ satisfies
\begin{align}
V_{\ell+1} < V_{\ell}, \quad \forall~\ell \text{ with } |e_{\ell}|\ge\varepsilon_{\mathrm{thr}}.
\end{align}
\item [(iii)] \textit{Finite-time practical convergence:} The algorithm terminates after a finite number of outer-loop steps $\ell^{*}<\infty$, at which point $|e_{\ell^{*}}|<\varepsilon_{\mathrm{thr}}$. Moreover,
\begin{align}
\ell^{*}\;\le\;\Bigl\lfloor \frac{V_0}{\delta(\varepsilon_{\mathrm{thr}})}\Bigr\rfloor+1,
\end{align}
where $\delta(\varepsilon_{\mathrm{thr}})=\tfrac{1}{2}\,\mu_{\min}\,\eta(\varepsilon_{\mathrm{thr}})\bigl(2\varepsilon_{\mathrm{thr}}-\mu_{\min}\,\eta(\varepsilon_{\mathrm{thr}})\bigr)>0$ is the minimum Lyapunov decrease per outer-loop step when $|e_{\ell}|\ge\varepsilon_{\mathrm{thr}}$.
\end{enumerate}
\end{theorem}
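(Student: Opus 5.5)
The plan is to organize the argument around the anchor sequence $\{(Z^{\mathrm{prev}}_{2,\ell},y^{\mathrm{prev}}_\ell)\}$ and to use the sector bounds of Remark~\ref{conv_param} as the main analytic tool. Assumptions B1--B5 are taken to provide three facts on the ball of radius $R$ around the current anchor: (a) the sector bound $\mu_{\min}|\Delta Z_2|\le|\Delta y|\le\mu_{\max}|\Delta Z_2|$; (b) the sign property that any perturbation with $0<|\Delta Z_2|\le\delta_0$ produced by the optimization \eqref{opt_prob} yields $\Delta y$ with sign opposite to $e_\ell$; and (c) the fact that, as $\gamma$ is ramped in \eqref{gam_cond2} up to $\gamma_{\max}$, the minimizer of \eqref{opt_prob} sweeps pseudo-RPM values in both directions around $Z_2^{\mathrm{cur}}$. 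I would first record the elementary identity behind the acceptance test. Writing $y=y^{\mathrm{cur}}$, $y^{\mathrm{prev}}=y_d+e_\ell$ and $\Delta y=y-y^{\mathrm{prev}}$, we have $T_1=\Delta y$ and $T_2=\Delta y+2e_\ell$. Hence $T_1T_2<0$ holds if and only if $\Delta y$ has sign opposite to $e_\ell$ and $|\Delta y|<2|e_\ell|$, and this is in turn equivalent to $|y-y_d|<|y^{\mathrm{prev}}-y_d|$. Acceptance is therefore exactly strict error decrease, and part (ii) follows directly from part (i).

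For part (i), I fix an anchor with $|e_\ell|\ge\varepsilon_{\mathrm{thr}}$. Step~4 guarantees that every applied perturbation satisfies $\eta_{\mathrm{eff}}\le|\Delta Z_2|\le Z_2^*$, where $Z_2^*=\min(\delta_0,|e_\ell|/\mu_{\max},R)\ge\min(\delta_0,\varepsilon_{\mathrm{thr}}/\mu_{\max},R)>0$. The bisection/clamping fallback always produces such a step, and $\eta_{\mathrm{eff}}>0$ excludes the zero step. I then show that a single cycle with a step pointing in the correct direction is accepted. By (b) the sign of $\Delta y$ is correct, and by (a) $|\Delta y|\le\mu_{\max}Z_2^*\le|e_\ell|<2|e_\ell|$. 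The identity above then gives $T_1T_2<0$.

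The main obstacle is to argue that a step in the correct direction is actually generated in finitely many cycles. This is needed if (b) only applies to the optimizer's \emph{intended} direction, or if $\gamma=0$ initially yields the wrong direction. My first attempt would be this: every rejected cycle has $\operatorname{sign}(T_1)$ fixed and pointing away from $y_d$, so \eqref{gam_cond2} increments $|\gamma|$ monotonically by $\Gamma$ in the corrective direction. After at most $\lceil\gamma_{\max}/\Gamma\rceil$ rejections, $\gamma$ saturates. By (c), the biased target $y+\gamma$ in \eqref{opt_prob} then forces the minimizer to the correct side of $Z_2^{\mathrm{cur}}$. Safeguarding then keeps the step in the favorable region, where acceptance is immediate. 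This yields a finite bound $N_{\mathrm{in}}\le\lceil\gamma_{\max}/\Gamma\rceil+1$ on the length of the inner loop.

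For part (iii), I quantify the decrease at an accepted step. There $|\Delta y|\ge\mu_{\min}|\Delta Z_2|\ge\mu_{\min}\eta$ with $\eta=\eta(\varepsilon_{\mathrm{thr}})=\min(\eta_{\min},Z_2^*)$, and $|\Delta y|\le|e_\ell|$. Therefore
\begin{align*}
V_\ell-V_{\ell+1}=\tfrac12\bigl(e_\ell^2-(|e_\ell|-|\Delta y|)^2\bigr)=\tfrac12|\Delta y|\bigl(2|e_\ell|-|\Delta y|\bigr).
\end{align*}
The map $s\mapsto s(2|e_\ell|-s)$ is increasing on $[0,|e_\ell|]$, so this quantity is at least $\tfrac12\mu_{\min}\eta(2|e_\ell|-\mu_{\min}\eta)\ge\delta(\varepsilon_{\mathrm{thr}})$. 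Positivity of $\delta$ follows from $\mu_{\min}\eta\le\mu_{\max}\eta_{\min}\le\varepsilon_{\mathrm{thr}}<2\varepsilon_{\mathrm{thr}}$. Summing these decreases gives $0\le V_{\ell}\le V_0-\ell\,\delta$ as long as the error stays above the threshold. Consequently the stopping test must trigger by $\ell^*\le\lfloor V_0/\delta\rfloor+1$. Because each inner loop is finite by (i), the total number of engine cycles is also finite.
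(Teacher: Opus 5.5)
Your skeleton matches the paper's proof. You use the identity $T_1T_2=(e^{\mathrm{cur}})^2-(e^{\mathrm{prev}})^2$, so acceptance is exactly strict error decrease. The sector bound and sign property, with $\mu_{\max}Z_2^*\le|e_\ell|$, give acceptance without overshoot. Part (ii) follows as a corollary, and (iii) comes from summing the exact decrease $V_\ell-V_{\ell+1}=\tfrac12|\Delta y_\ell|\bigl(2|e_\ell|-|\Delta y_\ell|\bigr)$ over anchors. In (iii), your use of the monotonicity of $s\mapsto s(2|e_\ell|-s)$ on $[0,|e_\ell|]$ is the correct way to reach $\delta(\varepsilon_{\mathrm{thr}})$. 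The paper's written chain bounds the two factors separately, by $\mu_{\min}\eta$ and by $\varepsilon_{\mathrm{thr}}$. That only yields $\tfrac12\mu_{\min}\eta\,\varepsilon_{\mathrm{thr}}$, which is smaller than $\delta(\varepsilon_{\mathrm{thr}})$ because $\mu_{\min}\eta\le\varepsilon_{\mathrm{thr}}$.

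Your third paragraph is where you diverge from the paper, and that argument does not hold up. The paper does not prove that a descent-direction step appears in finitely many cycles; it assumes it. B2 is stated for the \emph{effective} (post-safeguard) perturbation $\Delta Z_2(\gamma)$, measured from the anchor $Z^{\mathrm{prev}}$, at any $\gamma$ with $0<|\Delta Z_2(\gamma)|\le\delta_0$. B4 posits that every inner-loop step has $0<|\Delta Z_2|\le Z_2^*$ and lies along that descent direction. With these, your second paragraph already closes (i); in fact the first step after each anchor update is accepted.

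Your substitute has three problems:
\begin{enumerate}
\item It rests on property (c), which is not among B1--B5. B3 gives only continuity and Lipschitz regularity of $\gamma\mapsto\Delta Z_2(\gamma)$. It says nothing about which side of $Z_2^{\mathrm{cur}}$ the minimizer lands on at $|\gamma|=\gamma_{\max}$.
\item The monotone-ramp claim fails for Algorithm~\ref{alg:corrected} as written. Phase~1 of Step~4 overwrites $\gamma\leftarrow\tau_s^{\,j}\gamma$, and Phase~2 resets $\gamma\leftarrow 0$. The clamped step also inherits the sign of the optimizer output at the shrunk $\gamma_{\mathrm{trial}}$. So $|\gamma|$ need not grow monotonically or ever saturate.
\item After a rejected cycle, $Z^{\mathrm{cur}}$ has already moved off the anchor, while Step~4 bounds only $\hat Z_2^{new}-Z_2^{\mathrm{cur}}$. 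The displacement from $Z^{\mathrm{prev}}$ is what the sector and sign properties, and the test against $y^{\mathrm{prev}}$, actually see, and it can exceed $Z_2^*$. Hence neither ``safeguarding keeps the step in the favorable region'' nor the fixed sign of $T_1$ on rejected cycles is justified.
\end{enumerate}
The bound $N_{\mathrm{in}}\le\lceil\gamma_{\max}/\Gamma\rceil+1$ therefore does not follow. Drop that paragraph and invoke B2 and B4 instead.
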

\begin{proof}
    Refer to proof provided in Appendix \ref{thm_proof}.
\end{proof}

\section{Results and Discussion}\label{sec_three}
This section presents the results obtained from conducting simulation studies and demonstrates how the engine performs under the application of the proposed data-driven modeling and control framework. 

\subsection{Data Collection}
For the simulation studies presented in this work, the dataset is generated using Computational Fluid Dynamics (CFD) simulations conducted at Computational Reactive Flow \& Energy Lab (CRFEL) of University of Minnesota–Twin Cities. These simulations are based on a GM 1.9 L single-cylinder EACI metal engine. Key engine specifications are summarized in Table \ref{tb:engine}, and further details are available in \cite{narayanan2024simulation}.
\begin{table}[h!]
\centering
\caption{Key Specifications of the EACI Metal Engine}
\begin{tabular}{l c}
\hline
\textbf{Parameters} & \textbf{Values} \\
\hline
Bore & 83 mm \\
Stroke & 90.4 mm \\
Connecting Rod Length (CR Length) & 144.8 mm \\
Displacement & 0.48 L \\
Effective Compression Ratio & 15.1 \\
Swirl Ratio & 1 \\
Inlet Valve Closing (IVC) & -132\textdegree{} CA \\
Number of Nozzle Holes & 7 \\
Nozzle Diameter & 141 $\mu$m \\
Spray Cone Angle & 18\textdegree \\
Spray Tilt Angle & 70\textdegree \\
\hline
\end{tabular}
\label{tb:engine}
\end{table}
The data is collected under steady-state engine operation across a range of fuel cetane numbers (CN), engine speeds (RPM), and control inputs namely, Glow Plug Power (GPP) and Main Injection Timing (MIT). Specifically, operating conditions are sampled at $CN = [35, 42, 48]$ and $RPM = [1200, 2000, 2800]$. For each condition, control inputs are varied over $MIT \in [-25, -3]$ and $GPP \in [0, 70]$ to obtain the corresponding combustion phasing output, $CA50$. Through this process a dataset of 217 points comprising input-output tuples of the form (MIT, GPP, CN, RPM, CA50) were generated.

\subsection{Modeling and Feedforward Control Design}
One of our previous works \cite{dong2022data} proposed the design of a data-driven feedforward control framework tailored specifically for a fixed fuel type. In this approach, control inputs were generated based on a surrogate model that was valid only for a particular fuel property, thereby limiting its applicability across fuels with varying characteristics. Another of our studies \cite{pal2024data} addressed this limitation by presenting a method to synthesize unified feedforward control policies for engines operating with different fuel cetane numbers (CN). This extension allowed for generalization over fuel variability, making the control strategy more versatile across a broader set of real-world conditions. In addition to CN as operating condition as our previous works \cite{pal2024data,govindraju2023rate}, this work incorporates the engine speed as an additional operating condition thereby bringing in further, complexity in combustion behavior dynamics. Based on the collected
data, the Gaussian process model was created as discussed in \cite{dong2022data,williams2006gaussian}. For simulations, the real engine is considered to be a GP model trained on the full dataset of 217 points. While, the surrogate model is trained on a subset of 32 data points, selected from the full dataset, with base operating conditions fixed at $CN = [35, 42, 48]$ and $RPM = [1200, 2800]$. Subsequently, the surrogate model is utilized to construct feedforward LUT through model inversion by solving the optimization problem \cite{dong2022data}. In this work, the weighting factors in the optimization problem are selected as $p_1=1$ and $p_2=0.01$ to balance tracking performance with robustness and actuation efficiency. To ensure that the control inputs derived from the LUT remain within permissible bounds, the actuator limits are imposed as constraints in the optimization framework \cite{govindraju2023rate}. To perform optimization, the ga function from MATLAB Global Optimization Toolbox is used to implement the genetic algorithm with the population size = 100 and total iteration number = 500.

\begin{figure}[h!]
    \centering
    \includegraphics[width=0.55\linewidth,height = 0.4\linewidth]{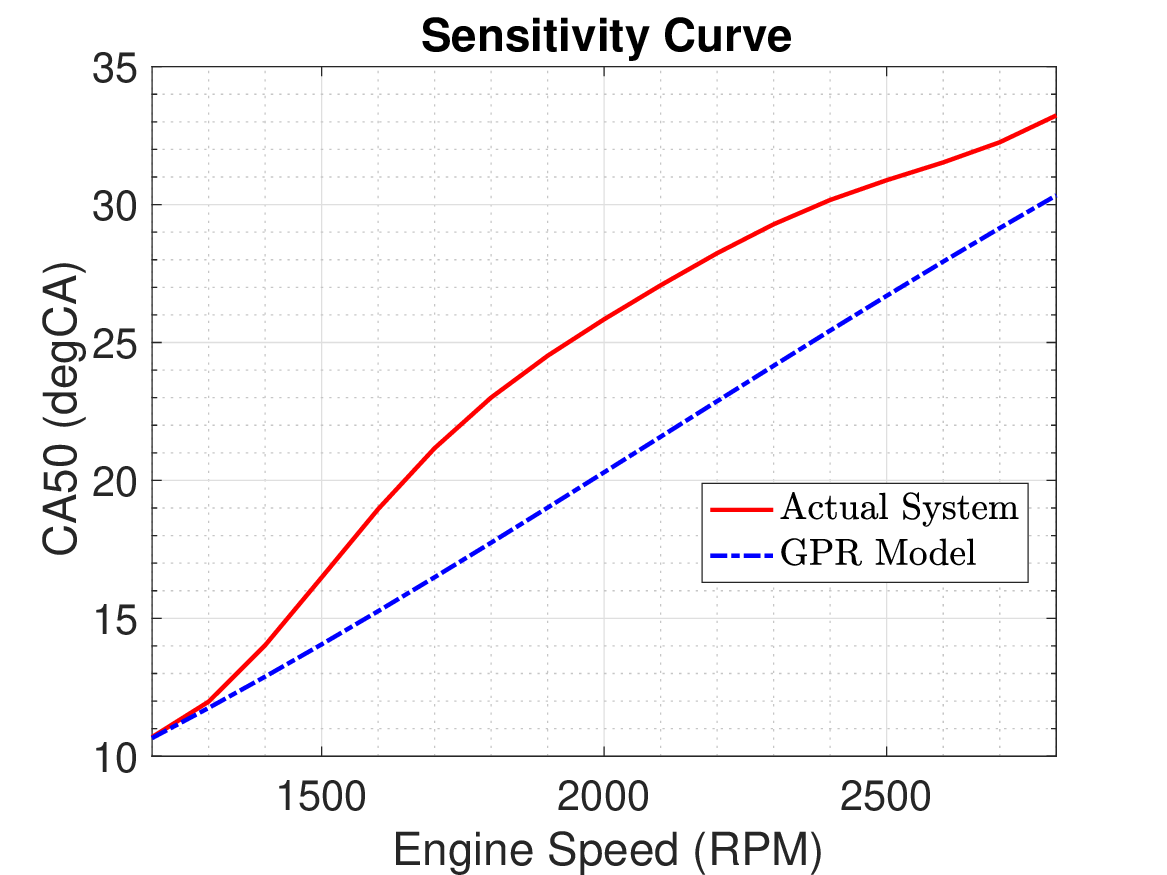}
    \caption{Variation of $CA50$ with engine speed for $CN=36$, $MIT=-7$ deg CA and $GPP=0$ W}
    \label{fig:sc}
\end{figure}

\begin{figure}[htp]
\centering

\begin{subfigure}{0.65\textwidth}
  \centering
  \includegraphics[width=0.85\linewidth,height = 0.55 \linewidth]{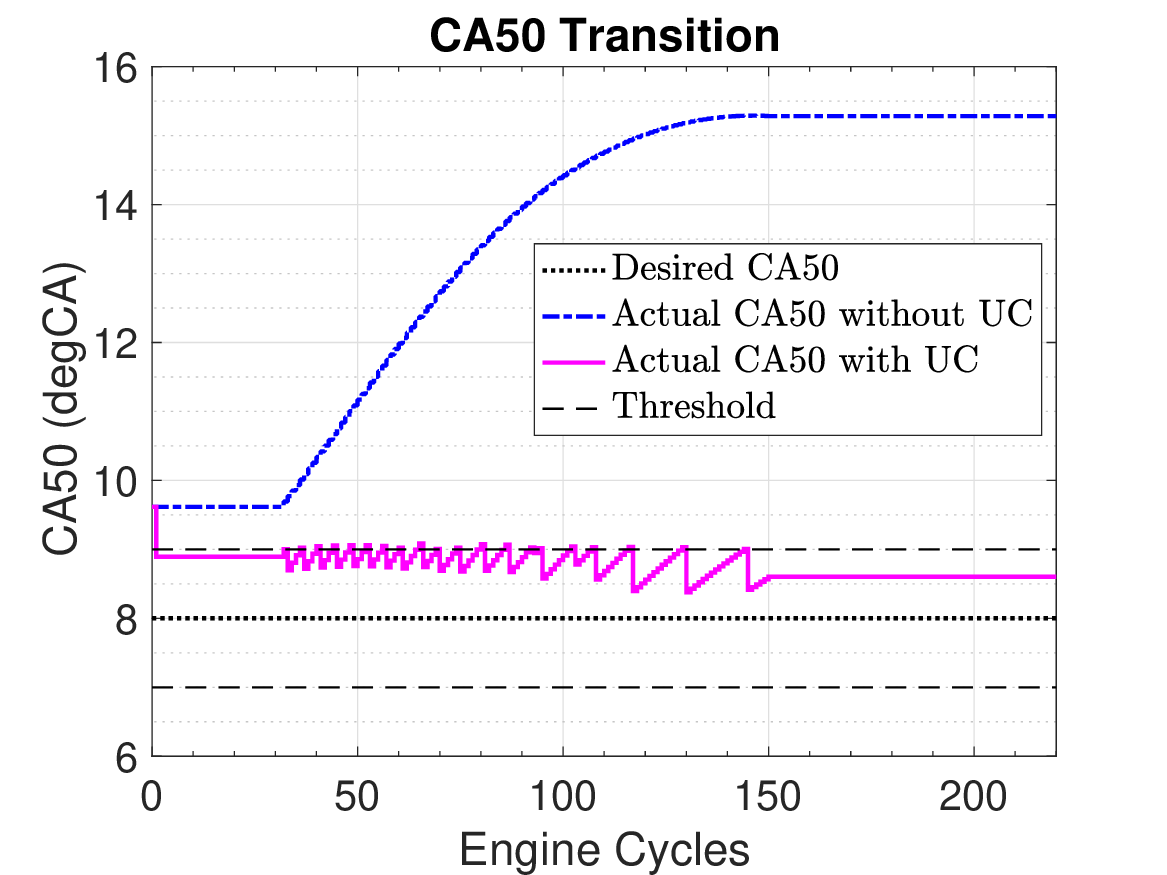}
\end{subfigure}


\begin{subfigure}{0.65\textwidth}
  \centering
  \includegraphics[width=0.85\linewidth,height = 0.55\linewidth]{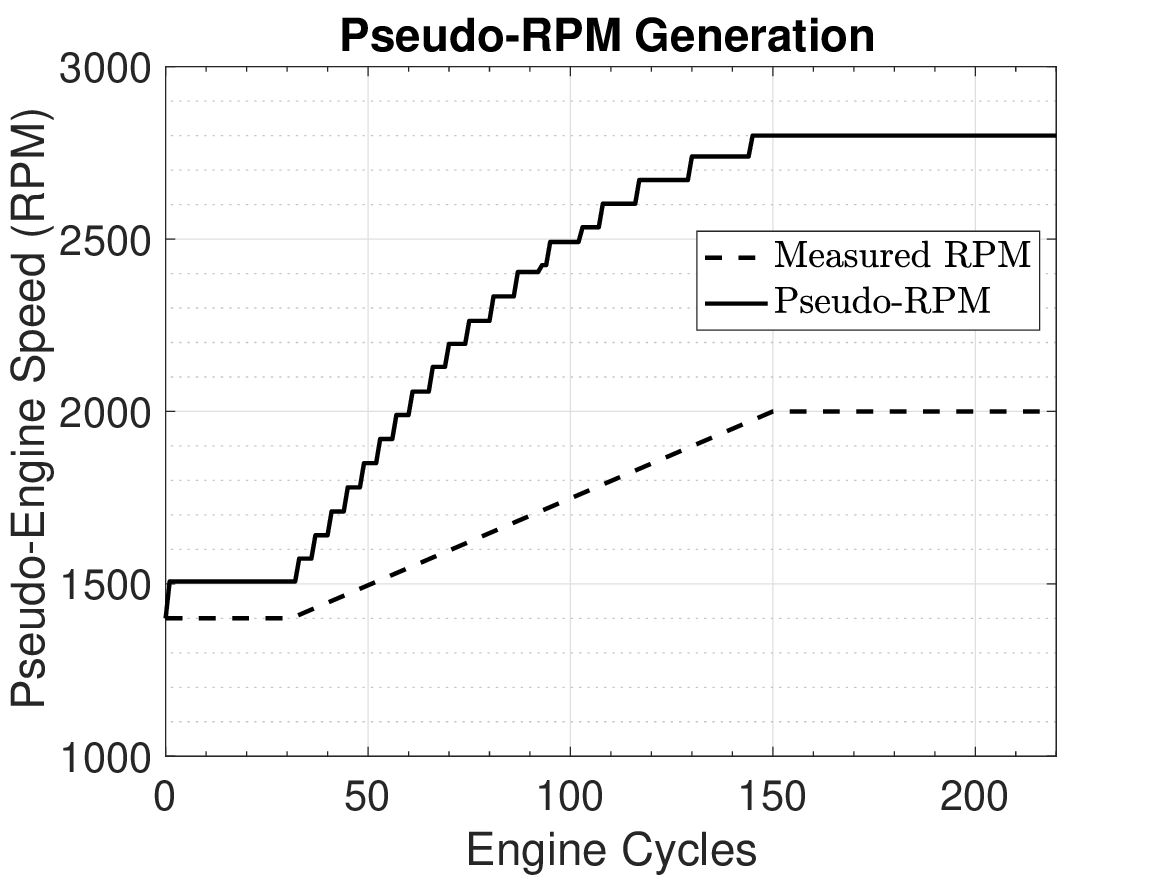}
\end{subfigure}


\begin{subfigure}{0.65\textwidth}
  \centering
  \includegraphics[width=0.85\linewidth,height = 0.55\linewidth]{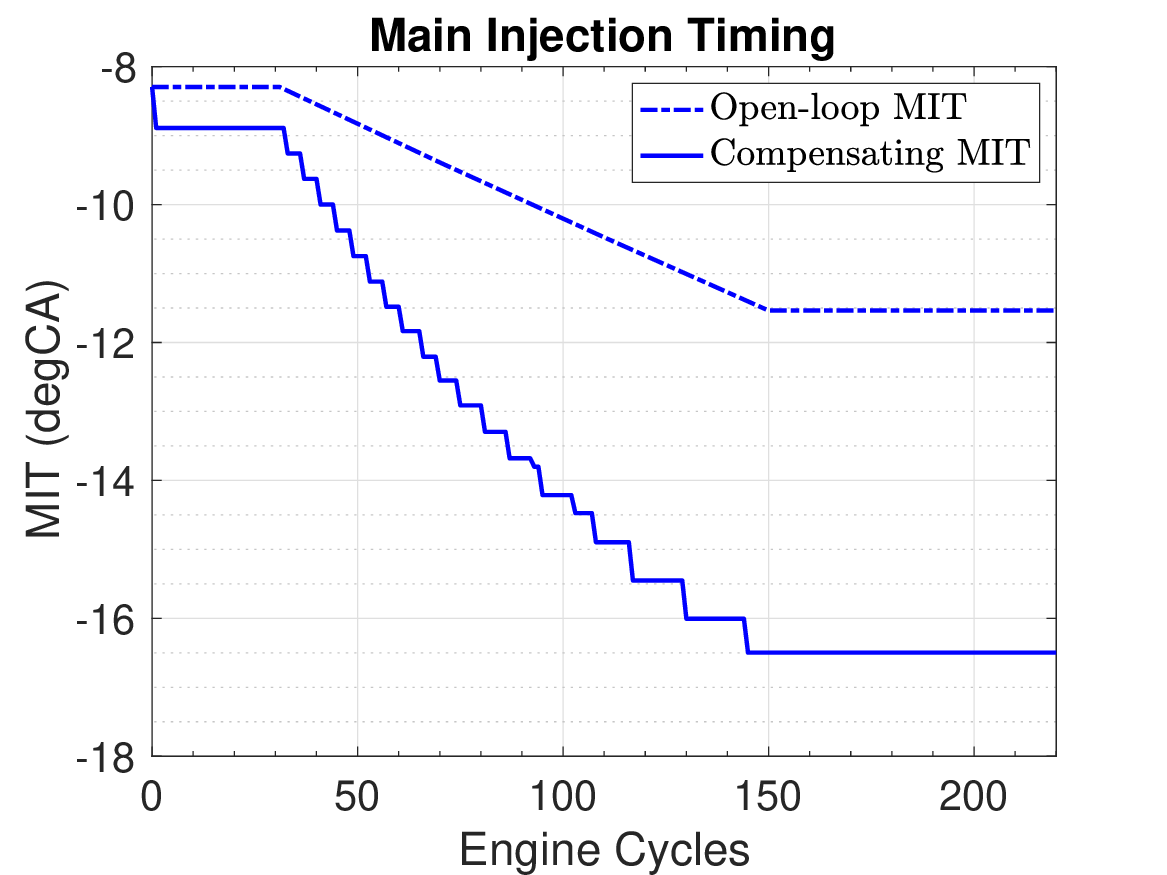}
\end{subfigure}


\begin{subfigure}{0.65\textwidth}
  \centering
  \includegraphics[width=0.85\linewidth,height = 0.55\linewidth]{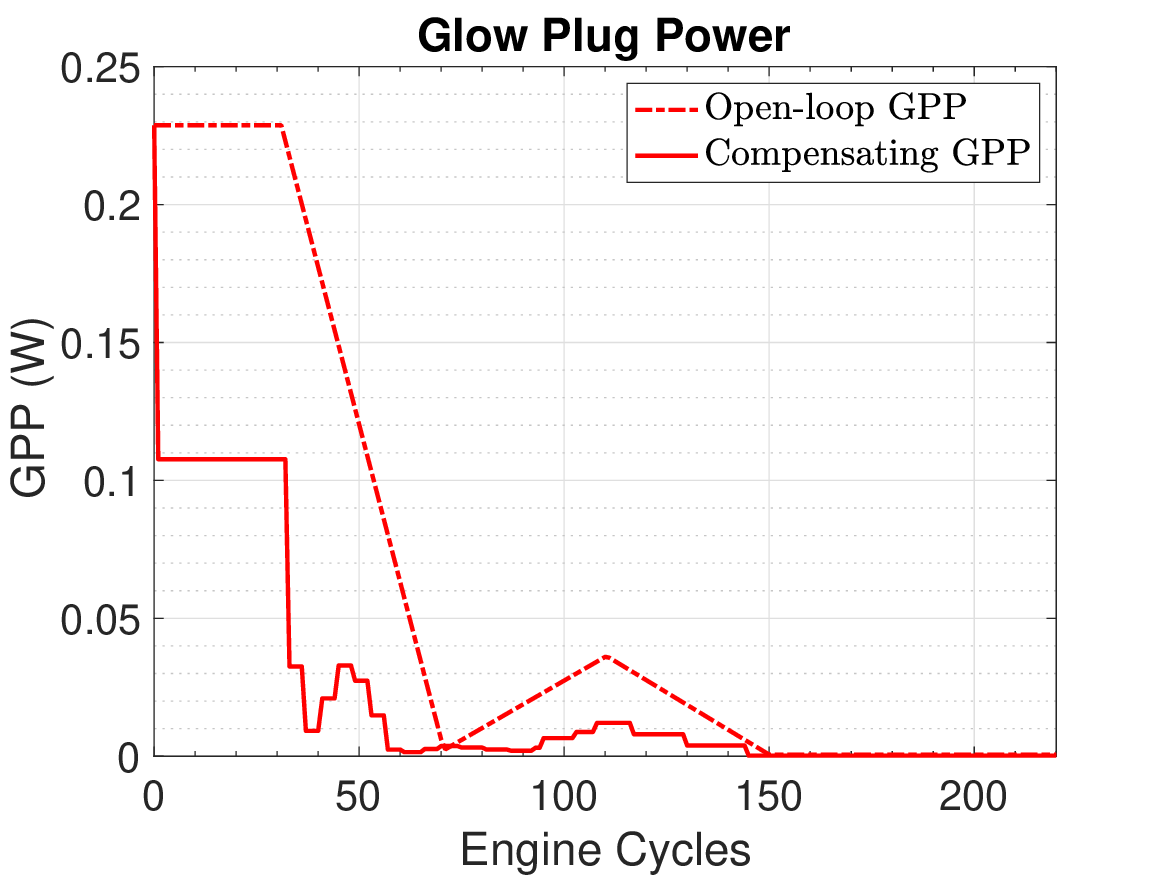}
\end{subfigure}

\caption{Simulation Results for Case 1 where CI engine is supplied with fuel $CN=45$ and engine speed changing from $RPM=1400$ to $RPM=2000$}
\label{fig:case1}

\end{figure}

\begin{figure}
    \centering
    \includegraphics[width=0.55\linewidth,height = 0.4\linewidth]{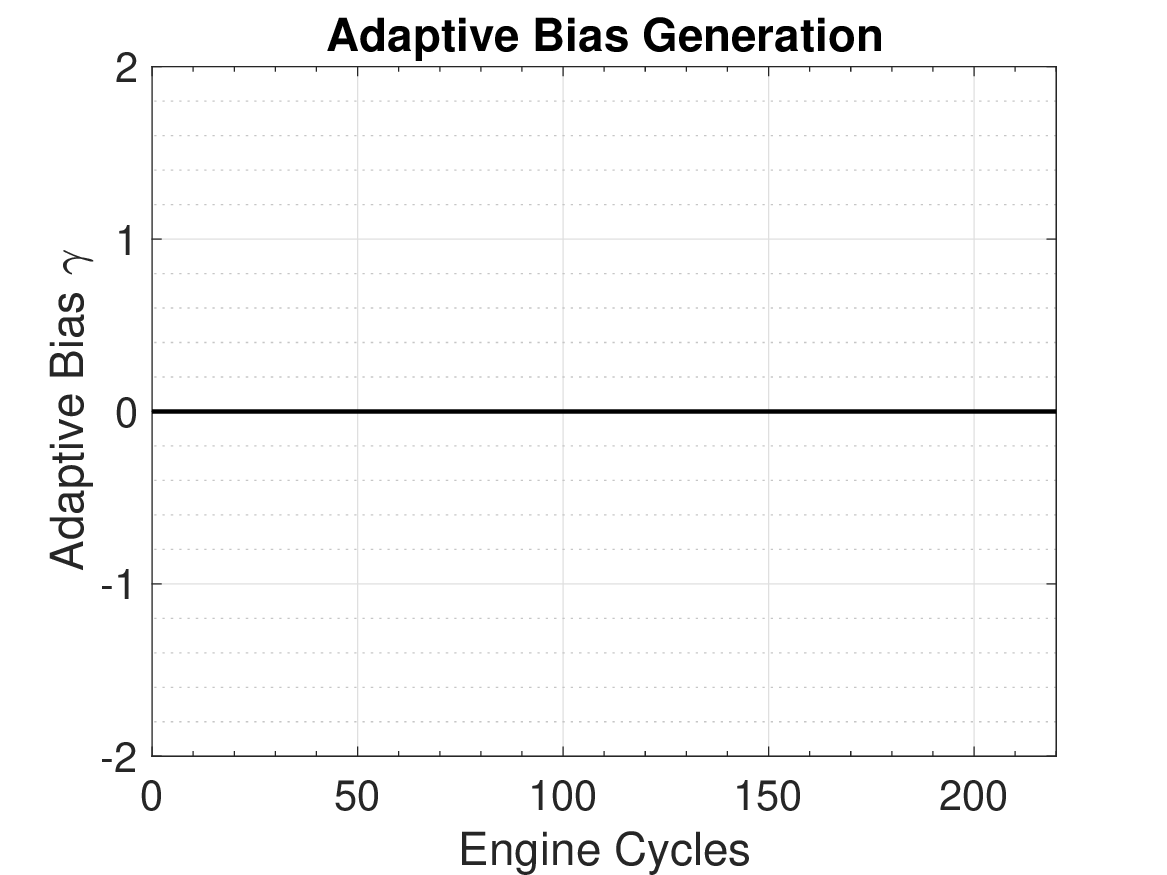}
    \caption{Adaptive Bias for Case 1}
    \label{fig:case1_bias}
\end{figure}

\begin{figure}[htp]
\centering

\begin{subfigure}{0.65\textwidth}
  \centering
  \includegraphics[width=0.85\linewidth,height = 0.55 \linewidth]{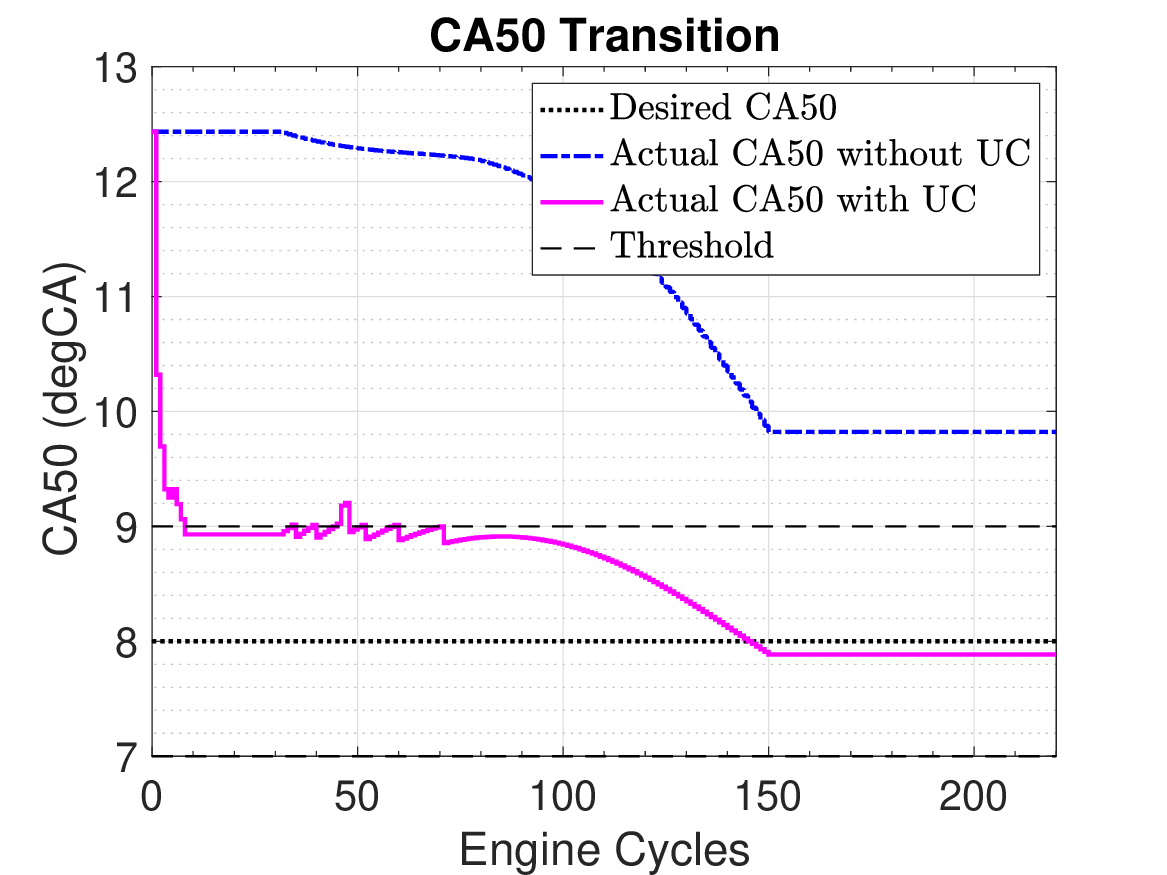}
\end{subfigure}


\begin{subfigure}{0.65\textwidth}
  \centering
  \includegraphics[width=0.85\linewidth,height = 0.55\linewidth]{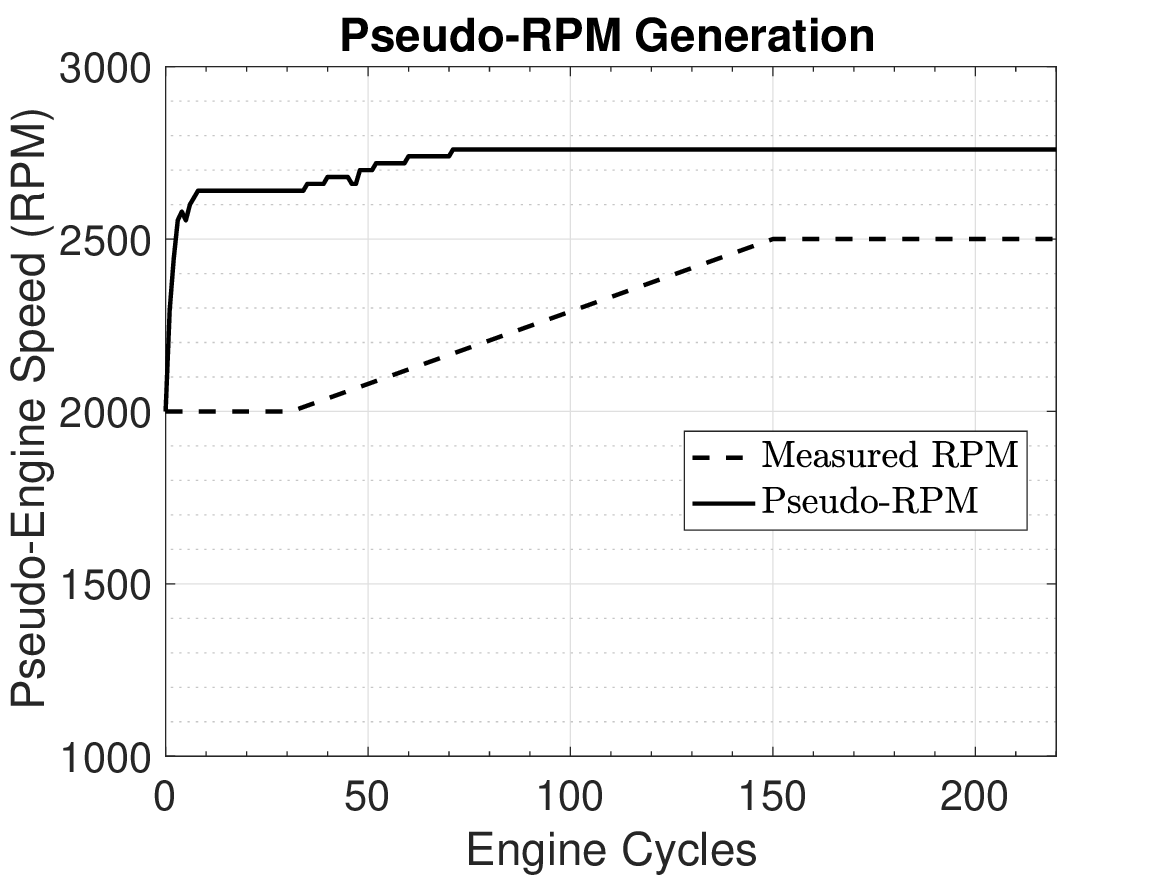}
\end{subfigure}


\begin{subfigure}{0.65\textwidth}
  \centering
  \includegraphics[width=0.85\linewidth,height = 0.55\linewidth]{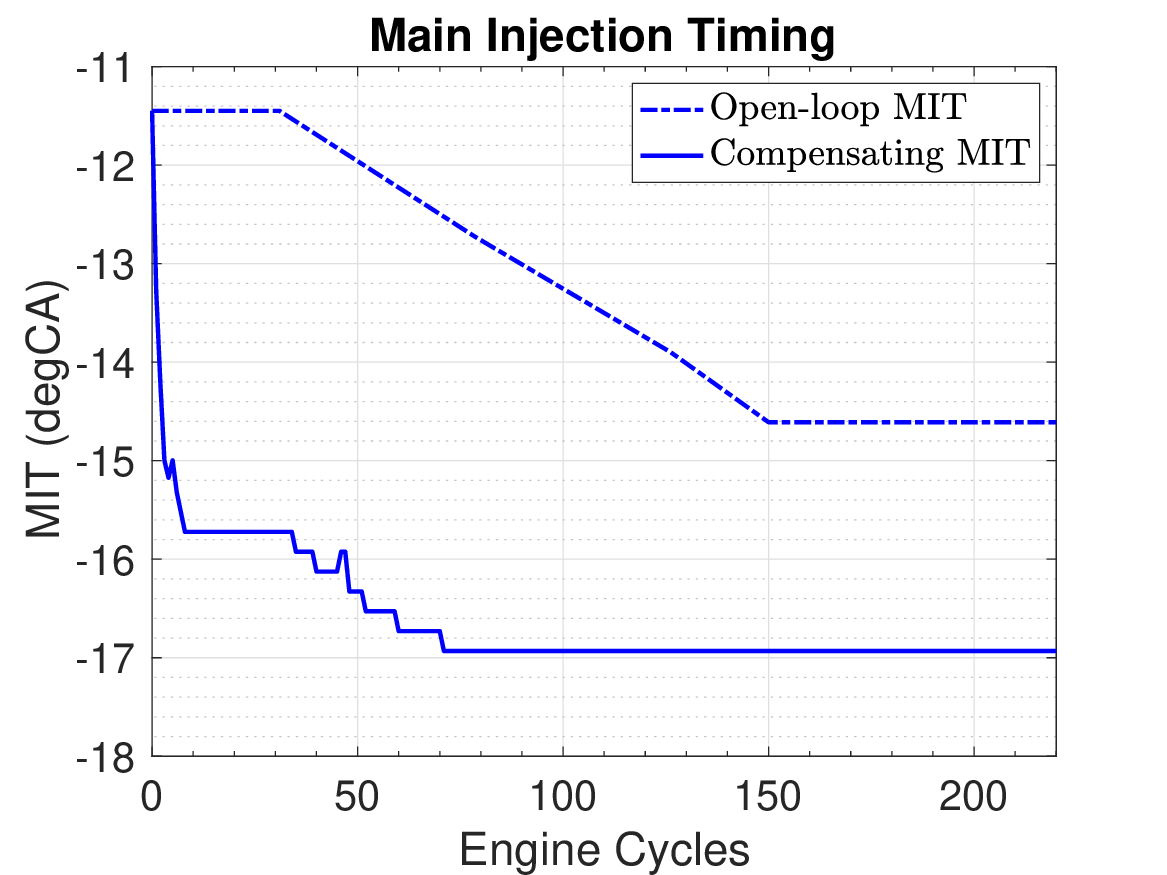}
\end{subfigure}


\begin{subfigure}{0.65\textwidth}
  \centering
  \includegraphics[width=0.85\linewidth,height = 0.55\linewidth]{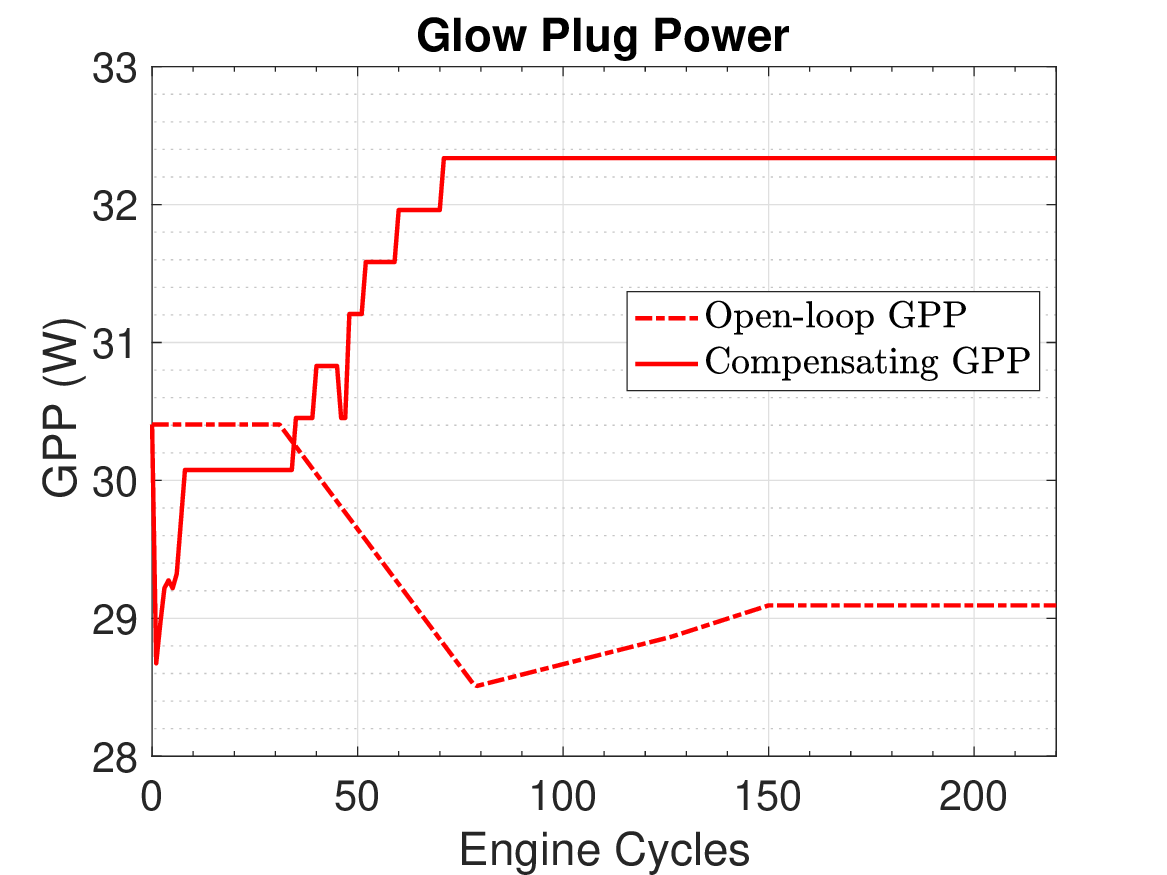}
\end{subfigure}

\caption{Simulation Results for Case 2 where CI engine is supplied with fuel $CN=36$ and engine speed changing from $RPM=2000$ to $RPM=2500$}
\label{fig:case2}

\end{figure}

\begin{figure}
    \centering
    \includegraphics[width=0.55\linewidth,height = 0.4\linewidth]{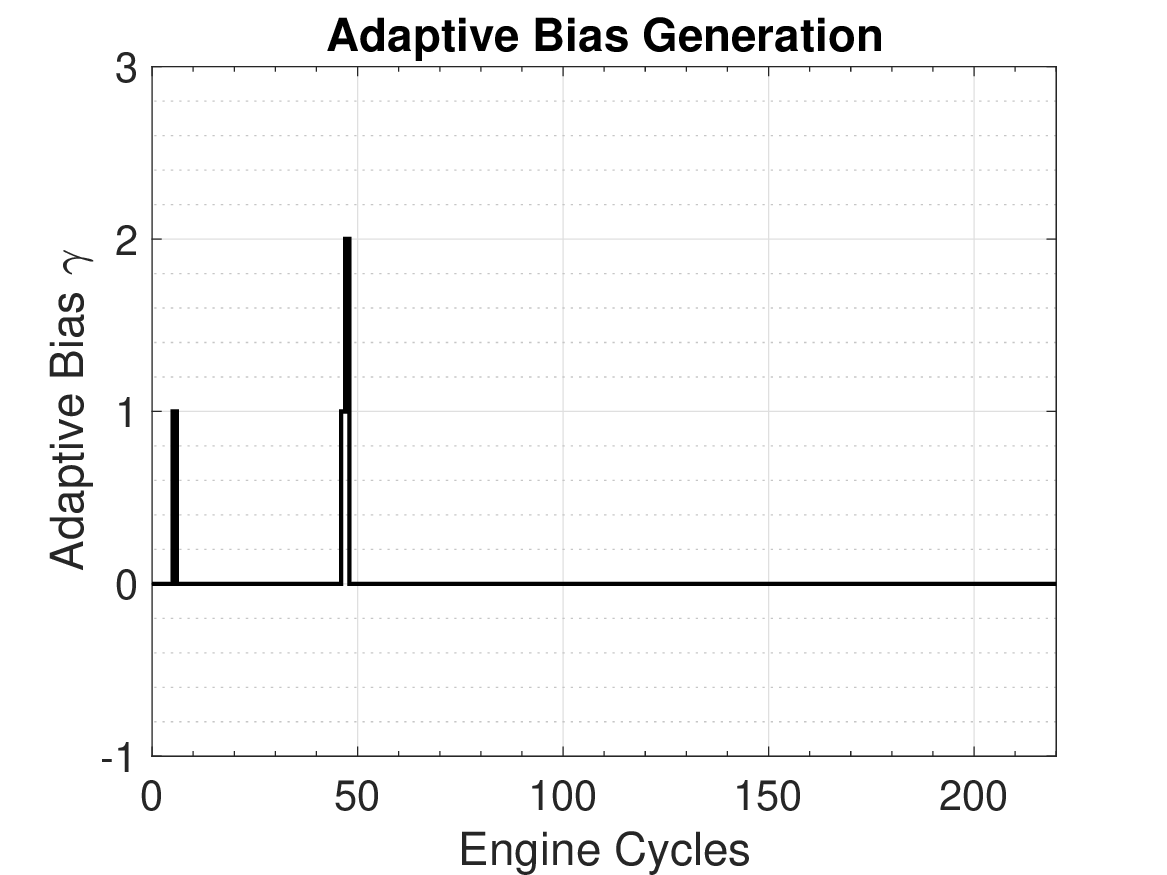}
    \caption{Adaptive Bias for Case 2}
    \label{fig:case2_bias}
\end{figure}

\subsection{Parameter Selection}

The convergence parameters are calibrated directly from the high-fidelity GP engine model by measuring how CA50 responds to a change in the pseudo-RPM coordinate. With cetane number and engine speed held fixed, the pseudo-RPM is perturbed by $\Delta Z_2$ over $\pm300$ RPM and the resulting CA50 change $\Delta y$ recorded which is the local sensitivity $|\Delta y|/|\Delta Z_2|$ (deg\,CA/RPM). This sweep is repeated across the operating envelope of the test cases ($CN\in\{36,45\}$, pseudo-RPM $2000$--$2800$ RPM, engine speed $1400$--$2500$ RPM which is 70 conditions in total), and each parameter is taken as the worst case over all of them.

The step radius $R=\delta_0=300$ RPM is the largest perturbation over which CA50 still moves toward the setpoint (the sensitivity keeps its sign), so a per-cycle correction up to $300$ RPM is always in the correct direction. The sensitivity stays between $\mu_{\min}=2.8\times10^{-3}$ and $\mu_{\max}=1.5\times10^{-2}$ deg\,CA/RPM. The strictly positive lower bound guarantees that every correction produces a measurable CA50 change (no dead zone), while the upper bound caps the step at $|e_\ell|/\mu_{\max}$ so a single correction cannot overshoot the setpoint. The applied step is therefore bounded as $Z_2^{*}=\min(\delta_0,\,|e_\ell|/\mu_{\max},\,R)$, and is floored at $\eta_{\min}=20$ RPM which is chosen below $\varepsilon_{\mathrm{thr}}/\mu_{\max}=66$ RPM so the floor itself cannot overshoot. Thus, one avoids stalling on negligibly small corrections. With a convergence tolerance $\varepsilon_{\mathrm{thr}}=1$ deg\,CA, these settings drive CA50 into the tolerance band in a finite number of cycles per Theorem~\ref{thm:convergence}.

The objective in this section is to validate the efficacy of the proposed uncertainty compensator in achieving desired engine performance i.e. CA50, under challenging real-time conditions where the engine is subjected simultaneously to fluctuations in fuel ignition quality (CN), variations in engine speed (RPM), and inherent model uncertainties. To perform the simulation tests, the desired CA50 is kept fixed at $8$ deg CA aTDC (after Top Dead Center). Two different cases are considered: (1) Case 1 where  engine speed changes from $RPM=1400$ to $RPM=2000$ with fuel $CN = 45$, and (2) Case 2 where  engine speed changes from $RPM=2000$ to $RPM=2500$ with fuel $CN = 36$. The threshold value is set to $1$ deg CA, in order to appropriately trigger the uncertainty compensator algorithm.

\subsection{Results and discussions}

The open-loop study in Figure \ref{fig:case1} and \ref{fig:case2} shows how the engine behaves under different fuel blends and engine speeds. It is clear that the open-loop actual $CA50$ (blue line) stays far from the desired value. This means the open-loop control is not able to achieve the desired combustion phasing. To understand why this happens, let us consider the sensitivity curves shown in Figure \ref{fig:sc} for $CN=36$. These curves illustrate the variation of $CA50$ with engine speed under fixed control inputs ($MIT = -7$, $GPP = 0$). The discrepancy between the real engine response and the GPR model prediction in this figure highlights the modeling error, which can be attributed to the limited training data available for the GPR model. As a result, the $CA50$ predicted by the model does not align well with the actual engine output across the full range of operating speeds. This mismatch causes tracking errors in the open-loop case. It is important to note that the level of model uncertainty depends on how far the engine is operating from the training points. For example for $CN=36$, when the engine operates at $RPM = 1400$, the model is relatively more accurate because this speed is closer to the training points. However, at $RPM = 2200$, which is farther from the training data, the model uncertainty increases, leading to larger errors.

Let us now examine how the uncertainty compensator handles model inaccuracies while running at varied operating conditions in both cases. Figure \ref{fig:case1} presents the system performance for Case 1, while Figure \ref{fig:case2} shows the results for Case 2. In both cases, the tracking error defined as the difference between the actual $CA50$ and the desired value, exceeds a pre-specified threshold. This condition activates the uncertainty compensator. Once triggered, the compensator begins searching along the model's sensitivity curve for given current $MIT$ and $GPP$. Its goal is to find a pseudo-speed value where the model output $CA50$ closely matches the actual engine $CA50$. After identifying this pseudo-speed, it is passed to the LUT to generate updated values of $MIT$ and $GPP$. These new control inputs are applied to the engine, leading to a reduction in tracking error. This correction process repeats across engine cycles and the tracking error continues to decrease until it falls below the threshold. For Case 1, it is observed that the error steadily decreases over successive engine cycles, indicating that the compensator effectively addresses the model uncertainties in this scenario. For Case 2, a similar process is followed initially. However, around certain engine cycles, the tracking error suddenly increases instead of decreasing, even though the compensator is still active. To handle this situation, the algorithm introduces an additional correction mechanism. Instead of trying to match the model output exactly to the actual $CA50$, the compensator now targets a slightly biased value, which is the actual $CA50$ plus a small, non-zero offset called the adaptive bias $\gamma$. This bias $\gamma$ allows the control input to be fine-tuned more effectively when the direct matching strategy fails. As a result, the error starts decreasing again and eventually converges to within the acceptable range of threshold. Figure \ref{fig:case2_bias} clearly show that $\gamma$ becomes non-zero during the adjustment phase, indicating that the compensator has switched strategies to maintain performance otherwise, it remains zero as in Figure \ref{fig:case1_bias}.

This study thus, demonstrates that the proposed data-driven control framework is robust to model uncertainties and capable of maintaining reliable system performance in real-time even under the effect of fluctuations in operating conditions of fuel ignition quality and engine speed. The uncertainty compensator plays a key role by dynamically adjusting the control inputs to mitigate the effect of such challenging scenarios. As observed in the results, the compensator generates modified control inputs (solid lines) that deviate from the baseline open-loop policy (dashed lines). These updated inputs drive the actual $CA50$ response (magenta line) to closely follow the desired $CA50$ trajectory.

\section{Conclusion} \label{sec:six}
This work presents a uncertainty compensation framework for achieving reliable combustion phasing in multi-fuel compression ignition engines where simultaneous fluctuations in fuel ignition quality, variations in engine speed and inherent model uncertainties adversely affect system performance. By introducing the concept of a pseudo-engine speed, the proposed method enables real-time adaptation of control inputs through uncertainty-aware feedforward compensation, synthesized via inversion of a Gaussian Process Regression-based surrogate model. The approach effectively steers combustion phasing toward desired CA50 value even under such challenging scenarios. Validation using data from high-fidelity engine simulations confirms the efficacy and generalizability of the framework. Further, this work includes the finite-time convergence through stability analysis of the proposed strategy. The results demonstrate that the integration of data-driven modeling with uncertainty compensation offers a direction for real-time combustion control in multi-fuel propulsion systems. Future work will aim to extend the proposed framework to efficiently incorporate a broader set of engine control parameters, thereby enhancing its scalability and applicability to more complex scenarios in higher dimensional setups.

\section*{CRediT Authorship Contribution Statement}

\textbf{Rajasree Sarkar:} Conceptualization, Data curation, Formal analysis, Investigation, Methodology, Software, Validation, Visualization, Writing – original draft, Writing – review and editing \\
\textbf{Arunava Banerjee:} Data curation, Formal analysis, Methodology, Software, Visualization, Writing – original draft, Writing – review and editing \\
\textbf{Sathya Aswath Govind Raju:} Formal analysis, Software, Validation, Visualization, Writing – review and editing \\
\textbf{Ishan Berk Altiner:} Formal analysis, Software, Validation, Visualization, Writing – review and editing\\
\textbf{Zongxuan Sun:} Conceptualization, Funding acquisition, Investigation, Methodology, Project administration, Resources, Supervision, Writing – review and editing \\
\textbf{Kenneth Kim:} Project administration, Resources, Supervision, Writing – review and editing \\
\textbf{Chol-Bum Mike Keown:} Project administration, Resources, Supervision, Writing – review and editing 

\section*{Declaration of competing interest}
The authors declare that they have no known competing financial interests or personal relationships that could have appeared to influence the work reported in this paper.

\section*{Acknowledgements}
Research was sponsored by the DEVCOM Army Research Laboratory and was accomplished under Cooperative Agreement Number W911NF-20-2-0161. The views and conclusions contained in this document are those of the authors and should not be interpreted as representing the official policies, either expressed or implied, of the DEVCOM Army Research Laboratory of the U.S. Government. The U.S. Government is authorized to reproduce and distribute reprints for Government. The authors would like to thank the team at Engine Research center of University of Wisconsin Madison for providing us with the experimental data for this work. The authors would also like to mention that this work used computational resources at the Minnesota Supercomputing Institute (MSI), University of Minnesota. The authors would also like to thank the team at Computational Reactive Flow \& Energy Lab (CRFEL) of University of Minnesota–Twin Cities, Minneapolis for providing us with the CFD simulated data for this work.

\section*{Declaration of conflicting interests}
The authors declared no potential conflicts of interest with respect to the research, authorship, and/or publication of this article.

\section*{Funding}
Research was sponsored by the DEVCOM Army Research Laboratory and was accomplished under Cooperative Agreement Number W911NF-20-2-0161. The views and conclusions contained in this document are those of the authors and should not be interpreted as representing the official policies, either expressed or implied, of the DEVCOM Army Research Laboratory of the U.S. Government. The U.S. Government is authorized to reproduce and distribute reprints for Government.

\bibliographystyle{elsarticle-num}
\bibliography{Ref_BIO}
\appendix
\section{} \label{appendix}

\subsection{Notation and Setup.} 
Consider Algorithm \ref{alg:corrected} operating at a fixed real operating point $Z_0=(Z_{10},Z_{20})$. Let $y_{d}\in\mathbb{R}$ denote the desired CA50 setpoint. Define $e(k)=y(k)-y_{d}$ as the CA50 error at cycle~$k$, and let $e_{\ell}=y^{prev}_{\ell}-y_{d}$ denote the error at the $\ell$-th accepted anchor. Let $V_{\ell}=1/2~e_{\ell}^2$. For a given anchor $(y^{prev},Z^{prev})$, let $\Delta Z_2(\gamma)=Z^{new}(\gamma)-Z^{prev}$ denote the effective pseudo-RPM perturbation produced by the optimization step at bias value~$\gamma$.
Observe that the quantities $T_1=y^{cur}-y^{prev}$ and $T_2=y^{cur}+y^{prev}-2y_{d}$, therefore, 
\begin{align}\label{observation}
    T_1*T_2 &= \bigl[y^{cur}-y_{d}\bigr]^2-\bigl[y^{prev}-y_{d}\bigr]^2 \nonumber \\
    &=\;(e^{cur})^2-(e^{prev})^2
\end{align}
Thus, $T_1*T_2<0$ if and only if $|y^{cur}-y_{d}|<|y^{prev}-y_{d}|$.

\subsection{Assumptions.}\label{appendixB}

\noindent\textit{(B1) Sector bound on the plant--LUT composition:} There exist constants $0<\mu_{\min}\le\mu_{\max}$ such that, for every anchor $Z^{prev}$ visited by the algorithm and every $\Delta Z_2$ in a bounded convex neighborhood~$\mathcal{N}$ of the origin,
\begin{align}
    \mu_{\min}\,|\Delta Z_2|\;&\le\;|f(Z_{10},Z^{prev}+\Delta Z_2)-f(Z_{10},Z^{prev})|\;\; \nonumber \\
    &\le \mu_{\max}\,|\Delta Z_2|.
\end{align}

\noindent\textit{(B2) Local descent direction from optimization:} For any anchor with $e^{prev}\neq 0$ and for all $\gamma$ with $0<|\Delta Z_2(\gamma)|\le\delta_0$, the perturbation $\Delta Z_2(\gamma)$ produced by the optimization step is such that $\text{sign}\bigl(f(Z_{10},Z^{prev}+\Delta Z_2(\gamma))-f(Z_{10},Z^{prev})\bigr)=-\text{sign}(e^{prev})$. That is, the plant CA50 moves toward~$y_{d}$ under the model-based correction.

\medskip
\noindent\textit{(B3) Regularity of the identification mapping:} The mapping $\gamma\mapsto \Delta Z_2(\gamma)$ is continuous on $[-\gamma_{\max},\gamma_{\max}]$ and locally Lipschitz with constant $L_d>0$ in a neighborhood of any $\gamma$ at which $\Delta Z_2(\gamma)$ lies in the descent region.

\medskip
\noindent\textit{(B4) Effective step-size control:} For any anchor with $|e^{prev}|\ge\varepsilon_{\mathrm{thr}}$, let $Z_2^*=\min(\delta_0,\,|e_{\ell}|/\mu_{\max},\,R)$. The sequence of effective steps $\{\Delta Z_2(\gamma_n)\}$ generated by the inner loop satisfies $0<|\Delta Z_2(\gamma_n)|\le Z_2^*$ for every inner-loop cycle and lies along the descent direction of~B2, so a qualifying step is reached in finitely many cycles.

\medskip
\noindent\textit{(B5) Quantitative descent at acceptance:} There exists a non-decreasing function $\eta:(0,\infty)\to(0,\infty)$ with $\eta(s)>0$ for all $s>0$, such that whenever the inner loop terminates at anchor~$\ell$, the accepted perturbation satisfies $|\Delta Z_2^{(\ell)}|\ge\eta(|e_{\ell}|)$. In Algorithm~\ref{alg:corrected} this holds with the constant choice $\eta(\cdot) \equiv \eta_{min}$ enforced by the Step-4 minimum-step floor.

\subsection{Proof to Theorem \ref{thm:convergence}}\label{thm_proof}
(i) Fix anchor~$\ell$ with $|e_{\ell}|>0$. Set $Z_2^*=\min(\delta_0,|e_{\ell}|/\mu_{\max})$. For any perturbation $\Delta Z_2$ along the descent direction with $|\Delta Z_2|\in(0,Z_2^*]$, Assumption~B1 gives $|\Delta y|\in[\mu_{\min}|\Delta Z_2|,\mu_{\max}|\Delta Z_2|]$ with $\mu_{\max}|\Delta Z_2|\le|e_{\ell}|$, and Assumption~B2 gives $\text{sign}(\Delta y)=-\text{sign}(e_{\ell})$. Therefore $|e_{new}|=|e_{\ell}|-|\Delta y|<|e_{\ell}|$, whence $T_1 * T_2=e_{new}^2-e_{\ell}^2<0$ by \eqref{observation}. By Assumption~B4, the inner-loop sequence of effective steps visits $(0,Z_2^*]$ in finite iterations. At the first such visit with the step in the descent direction (guaranteed by~B2), $T_1 * T_2<0$ is triggered, and the inner loop terminates. 
\medskip
(ii) By Part~(i), the anchor is updated with $T_1 * T_2<0$, so \eqref{observation} gives $e_{\ell+1}^2<e_{\ell}^2$, i.e., $V_{\ell+1}<V_{\ell}$.
\medskip
(iii) \textit{Finite-time practical convergence.} While the algorithm operates, the loop guard enforces $|e_{\ell}|\ge\varepsilon_{\mathrm{thr}}$ for $\ell=0,1,\dots,\ell^{*}-1$. At each such anchor, the no-overshoot bound $|\Delta y_{\ell}|\le\mu_{\max}|\Delta Z_2^{(\ell)}|\le\mu_{\max}Z_2^*\le|e_{\ell}|$ gives the exact Lyapunov decrease
\begin{align}
V_{\ell}-V_{\ell+1}=\tfrac{1}{2}\,|\Delta y_{\ell}|\bigl(2|e_{\ell}|-|\Delta y_{\ell}|\bigr).
\end{align}
By B5 and B1, $|\Delta y_{\ell}|\ge\mu_{\min}\eta(|e_{\ell}|)\ge\mu_{\min}\eta(\varepsilon_{\mathrm{thr}})$, and since $\mu_{\min}\eta(\varepsilon_{\mathrm{thr}})\le|e_{\ell}|$ with $|e_{\ell}|\ge\varepsilon_{\mathrm{thr}}$ we have $2|e_{\ell}|-|\Delta y_{\ell}|\ge |e_{\ell}| \ge \varepsilon_{thr}$. Hence
\begin{align*}
V_{\ell}-V_{\ell+1} \geq & \tfrac{1}{2}\,\mu_{\min}\eta(\varepsilon_{\mathrm{thr}})\bigl(2\varepsilon_{\mathrm{thr}}-\mu_{\min}\eta(\varepsilon_{\mathrm{thr}})\bigr)\\=: & ~\delta(\varepsilon_{\mathrm{thr}})>0,
\end{align*}
a constant independent of~$\ell$. Summing, $V_0-V_{\ell^{*}}\ge\ell^{*}\,\delta(\varepsilon_{\mathrm{thr}})$, and since $V_{\ell^{*}}\ge0$ we obtain $\ell^{*}\le\lfloor V_0/\delta(\varepsilon_{\mathrm{thr}})\rfloor+1<\infty$, with $|e_{\ell^{*}}|<\varepsilon_{\mathrm{thr}}$ at termination. \qed

\begin{remark}[Physical interpretation of Assumptions B1--B5]
Assumption~B1 reflects the standard physical property that CA50 varies continuously and with bounded nonzero sensitivity to changes in the combustion-phasing actuator (here encoded through the LUT evaluated at pseudo-RPM coordinates). This is consistent with sensitivity studies for CA50 in engines and holds in well-calibrated operating regions away from misfire or knock boundaries. Assumption~B2 requires that the GP model's qualitative directional information (which way to adjust pseudo-RPM to move CA50 toward~$y_{d}$) is correct at the plant. This is a minimal model fidelity condition that does not require accurate quantitative prediction. Assumption~B3 is a regularity condition on the optimization problem that follows from smoothness of the GP model and positivity of the regularization parameter~$\beta$. It is a technical condition satisfied by construction in any implementation using a smooth GP model with $\beta>0$. Assumption~B4 requires that every pseudo-RPM correction applied to the plant is a conservative (small) step lying in the favorable region. The perturbation safeguard of Algorithm~\ref{alg:corrected} (Step~4) enforces $|\Delta Z_2|\le Z_2^*$ at every cycle by construction via $\gamma$-bisection toward zero, with a clamping fallback when necessary. As a consequence, B4 holds. Combined with the bounded range $\gamma\in[-\gamma_{\max},\gamma_{\max}]$ and the regularization in the optimization problem, this guarantees that each accepted correction stays within the sector-bounded region. Assumption~B5 holds by construction of the minimum-step floor in Algorithm~\ref{alg:corrected} where Step~4, enforces $\left|\Delta Z_2^{(\ell)}\right| \geq \eta_{\min} > 0$ at every applied update. This condition serves as the lower-bound counterpart of the maximum-step safeguard and precludes vanishingly small corrections that could otherwise stall finite-time convergence.
\end{remark}
\end{document}